\documentclass{cta-author}

{}
\newtheorem{corollary}{Corollary}{}
\newtheorem{remark}{Remark}{}
\newtheorem{proposition}{Proposition}{}

\usepackage{subfigure}

\def\begcen{\begin{center}}
\def\endcen{\end{center}}

\newcommand{\col}{\mbox{col}}

\def\calf{{\mathcal F}}

\def\calj{{\mathcal J}}

\def\cald{{\mathcal D}}
\def\calo{{\mathcal O}}
\def\calz{{\mathcal Z}}
\def\hal{\frac{1}{2}}

\def\calh{{\mathcal H}}

\def\L2e{{\mathcal L}_{2e}}

\def\rea{\mathbb{R}}

\def\et{\epsilon_t}

\def\l2{{\mathcal L}_2}
\def\l2e{{\cal L}_{2e}}

\def\rea{\mathbb{R}}

\def\iab{i_{\alpha\beta}}
\def\vab{v_{\alpha\beta}}

\def\biab{\bar{i}_{\alpha\beta}}
\def\hpf{\texttt{HPF}}
\def\lpf{\texttt{LPF}}
\def\hgrad{\mathcal{G}_{\texttt{grad}}}

\def\begequarr{\begin{eqnarray}}
\def\endequarr{\end{eqnarray}}
\def\begequarrs{\begin{eqnarray*}}
\def\endequarrs{\end{eqnarray*}}
\def\begarr{\begin{array}}
\def\endarr{\end{array}}
\def\begequ{\begin{equation}}
\def\endequ{\end{equation}}
\def\lab{\label}
\def\begdes{\begin{description}}
\def\enddes{\end{description}}
\def\begenu{\begin{enumerate}}
\def\begite{\begin{itemize}}
\def\endite{\end{itemize}}
\def\endenu{\end{enumerate}}

\def\lef[{\left[\begin{array}}
\def\rig]{\end{array}\right]}

\def\begcen{\begin{center}}
\def\endcen{\end{center}}
\def\begrem{\begin{remark}\rm}
\def\endrem{\end{remark}}

\def\TIE{{\it IEEE Trans. Industrial Electronics}}
\def\TIA{{\it IEEE Trans. Industry Applications}}

\def\EJC{{\it European J. of Control}}
\def\IJC{{\it Int. J. of Control}}

\def\SCL{{\it Systems \& Control Letters}}
\def\AUT{{\it Automatica}}

\def\CST{{\it IEEE Trans. Control Systems Technology}}


\def\begmat#1{\begin{bmatrix}#1\end{bmatrix}}
\def\begali#1{\begin{align}{#1}\end{align}}

\def\cfm#1{\textnormal{\textsc{\textbf{\textcolor{cBlue}{#1}}}}}

\usepackage{color}
\newcommand{\blue}[1]{{\color{blue} #1}}

\definecolor{cBlue}{rgb}{0.2,0.6,1}

\begin{document}


\title{\textnormal{\textsc{\textbf{\textcolor{cBlue}{A New Signal Injection-based Method for Estimation of Position in Interior Permanent Magnet Synchronous Motors}}}}}

\author{\rm \au{Bowen Yi$^{1,2}$}, \au{Slobodan N. Vukosavi\'c$^{3}$}, \au{Romeo Ortega$^{2,4}$}, \au{Aleksandar M. Stankovi\'c$^{5}$}, \au{Weidong Zhang$^{1\corr}$}}

\address{\add{1}{Department of Automation, Shanghai Jiao Tong University, Shanghai 200240, China}
\add{2}{Laboratoire des Signaux et Syst\`emes, CNRS - CentraleSup\'elec, Gif-sur-Yvette 91192, France}
\add{3}{Electrical Engineering Department, University of Belgrade, Belgrade 11000, Serbia}
\add{4}{Department of Control Systems and Informatics, ITMO University, St. Petersburg, Russia}
\add{5}{Department of Electrical Engineering and Computer Science, Tufts University, Medford, MA 02155, USA}
\email{wdzhang@sjtu.edu.cn}
}

\begin{abstract}
Several heuristic procedures to estimate the rotor position of interior permanent magnet synchronous motors (IPMSM) via signal injection have been reported in the applications literature, and are widely used in practice. These methods, based on the use of linear time invariant (LTI) high-pass/low-pass filtering, are instrumental for the development of sensorless controllers. To the best of our knowledge, no theoretical analysis of these methods has been carried out. The objectives of this note, are (i) to invoke some recent work on the application of averaging techniques for injection-based observer design to develop a theoretical framework to analyze the LTI filtering used in sensorless methods, and (ii) to propose a new method that, on one hand, ensures an improved accuracy and, on the other hand, can be related with the current filtering technique. An additional advantage of the new method is that it relies on the use of linear operators, implementable with simple computations. The effectiveness of the proposed scheme is assessed by experiments on an IPMSM platform driven by a 521 V DC bus with a 5-kHz PWM.
\end{abstract}

\maketitle

\section*{\cfm{\large Nomenclature}}

\vspace{-1cm}

\begin{table}[h]
\normalsize
\textbf{Symbols}\\

\vspace{0.2cm}

\begin{tabular}{ll}
$\alpha-\beta$ & Stationary axis reference frame quantities\\
$d-q$ & Synchronous axis reference frame quantities \\
$n_p$ & Number of pole pairs \\
$R_s$    & Stator resistance [$\Omega$]\\
$\omega$ & Angular velocity [rad/s] \\
$\Phi$ & Magnetic flux [Wb] \\
$J$ & Drive inertia [kg$\cdot\text{m}^2$] \\
$T_L$ & Load torque [N$\cdot$m]\\
$f$ & Friction constant \\
$\theta$ & Rotor flux angle [rad] \\
$L_d,L_q$ & $d$ and $q$-axis inductances [H] \\
$v,i$ & Stator voltage and current [V, A] \\
$\omega_h$ & Angular frequency of injection signal [rad/s]\\
$\varepsilon$ & Period of injection signal ($\varepsilon= {2\pi \over \omega_h}$) [s] \\
$V_h$ & Amplitude of injection signal [V]\\
${\tt HPF}$ & High-pass filter \\
${\tt LPF}$ & Low-pass filter \\
${\tt BIBO}$ & Bounded-input bounded-output \\
$|\cdot|$ & Euclidean norm \\
$s$ & {Laplace transform symbol}\\
$p$ & {Differential operator $p={d\over dt}$}\\
$y(t)=\calh[u(t)]$ & ${\tt BIBO}$ operator $\calh$ acting on the input signal\\
& $u(t)$ to generate the output $y(t)$.\\
$y_v$ & Virtual output\\
$\iab$ & $[i_\alpha, i_\beta]^\top$ \\
$\vab$ & $[v_\alpha,v_\beta]^\top$ \\
$I,~\calj$ & Identity matrix on $\rea^{2\times 2}$ and $\begin{bmatrix} 0 & -1 \\1 & 0 \end{bmatrix}$
\\
\end{tabular}
\end{table}


\begin{table}[h]
\normalsize
\textbf{Superscripts}\\

\begin{tabular}{ll}
$r$ & Actual reference frame\\
$\hat{r}$ & Estimated reference frame\\
$r^\star$ & Reference value \\
$r^h$ & High-frequency component \\
$r^\ell$ & Low-frequency component \\
$ \vab^C$ & Low-frequency control input
\end{tabular}
\end{table}

%
\section{\cfm{\large Introduction}}
%

Permanent magnet synchronous motors (PMSMs) are widely used in industrial applications because of their superior power density and high efficiency.  One of the more---practically and theoretically---challenging open problems for PMSMs, is the design of controllers without rotational sensors, the so-called sensorless control. Two different types of sensorless control methodologies are currently being used in practice. The first one is a model-based method, which is known in applications as the back-emf or flux-linkage estimation. In this method, the fundamental components of the electrical signals are used to design a back-emf observer or a {\em flux observer} \cite{ACAetal,BERPRA,BOBetal,BOBetalijc,MALetal,MATtie,ORTetalcst,POUetal,TILetal}. The second one is saliency-tracking-based method, in which information is extracted from the high-frequency components of stator currents via high-frequency {\em signal injections} \cite{JANetal,HOLtie,JEBetal,JANetaltia}.

It is well-known that, because of the loss of observability at standstill \cite{POUetal}, observer-based methods cannot be used in low-speed region \cite{JANetal}. On the other hand, the performance of the saliency tracking-based method, which utilizes the anisotropy due to rotor saliency and/or magnetic saturation, is not degraded at low speeds. In this paper, we address the problem of position estimation for PMSMs at \emph{low speeds or standstill}, using a signal injection based method.

There are two kinds of PMSMs, surface (SPMSM) or interior (IPMSM), the difference been the location of the permanent magnets, either on the surface of the rotor, for the former, or buried in the cavities of the rotor core, for the latter. There are several technological reasons why IPMSMs are more convenient in applications than SPMSMs, see \cite[Table 6.2]{NAMbook}. On the other hand, the magnetic characteristics of IPMSM, and consequently the dynamic model is, far more complicated than the one of SPMSM---see the discussion in \cite[Subsection 6.2.2]{NAMbook} and \cite[Section VI]{ORTetalcst}. As a matter of fact, because of the inability to deal with this complex dynamics, the overwhelming majority of papers published by the control community in sensorless control {operating in middle or high speed regions} are concerned with SPMSMs, see {\em e.g.}, \cite{BERPRA,BOBetal,BOBetalijc,MALetal,POUetal,TILetal}---with \cite{CHOetal,ORTetalauto} a notable exception. {In contrast, the saliency-tracking methods at low speeds {have} received more attentions from the applied journals in the context of IPMSMs.}

Due to the {rotor} saliency, the signal injection method is more efficiently applied in IPMSMs than in SPMSMs. For this reason, we consider in this paper IPMSMs.  In the last two decades, signal injection-based approaches have been successfully applied, in a heuristic manner, in various applications, some of them reported in application journals \cite{GONZHU,HOLtie,JANetal,JANetaltia,JEOetal,NAMbook}. The classical approach is, first, to inject high frequency probing signals into the motor terminal; then, extract the high-frequency components of the stator currents to get position estimates. Besides the question of the choice of the injection signal, and its mode of application, the key problem is the signal processing of the measured stator currents to extract the desired information of the mechanical coordinates. This task is usually achieved via the combination of linear time invariant (LTI) high pass-filters (HPFs) and low-pass filters (LPFs) \cite{NAMbook}---an approach which is justified by a series of technique-oriented practical considerations, hard to fit into a rigorous theoretical framework. To the best of our knowledge, no theoretical analysis of these heuristic methods has been reported in the literature.

Our contributions in this paper are threefold.

\begite
\item Provide a theoretical framework for the analysis of conventional LTI filtering methods used in injection-based sensorless control of IPMSMs.

    \vspace{0.1cm}

\item Propose a new method for the extraction of the information on the mechanical coordinates that, using the aforementioned framework, is shown to be superior to the existing {LTI filtering technique}. An important aspect in this point is that the increase in computational complexity with respect to the current HPF/LPF practice should be negligible.

    \vspace{0.1cm}

\item Prove that the new proposed method admits an HPF/LPF interpretation. This is an important issue, since it shows the connection---and downwards-compatibility---of the new method with standard industrial practice.
\endite

Towards this end, we rely on the recent work of \cite{COMetalacc,COMetal,JEBetal,YIetalscl} where, invoking averaging techniques, rigourous theoretical analysis of injection-based methods for observer design has been carried out. The importance of disposing of rigorous analytic results can hardly be overestimated, since it allows, on one hand, to carry out a quantitative performance assessment while, on the other hand, it provides guidelines to make more systematic and simplify the parameter tuning procedure. {We underscore that the IPMSM model adopted in the paper is widely accepted by the drives community, since it precisely describes the behavior of the machine in the absence of magnetic saturation, a phenomenon that is conspicuous by its absence when the load in the motor is within the normal operating range.}

The remainder of paper is organized as follows. In Section \ref{sec2}, we recall the mathematical model of IPMSMs and formulate the problem of estimation of position using signal injection. Section \ref{sec3} discusses the classical frequency-based accuracy analysis of the position estimators used for the conventional methods, and highlights their theoretical limitations. In Section \ref{sec4} the new method is proposed, and then some comparisons and similarities with the conventional methods are given in Section \ref{sec5}. Simulation and experimental results are given in Section \ref{sec6}. The paper is wrapped-up with some concluding remarks in Section \ref{sec7}.

\noindent {\em Caveat.} An abridged version of this paper has been presented in \cite{YIccta}.

%
\section{\cfm{\large Model and Problem Formulation}}
\label{sec2}
%
The voltage equations of the IPMSM in the stationary frame are given by \cite{NAMbook}
\begin{equation}
\label{volt_model}
\begin{aligned}
\vab  = \Big[ R_sI + L(\theta){p} - 2n_p\omega L_1 Q(\theta) \calj \Big]\iab
           + {n_p\omega}\Phi \left[\begin{aligned} - \sin\theta \\ \cos\theta \end{aligned}\right],
\end{aligned}
\end{equation}
where we define the mappings
$$
\begin{aligned}
L(\theta)& := L_0I + L_1Q(\theta)\\
Q(\theta)& := \left[\begin{aligned} &\cos2\theta & \sin2\theta \\& \sin2\theta & - \cos2\theta \end{aligned}\right],
\end{aligned}
$$
with the averaged inductance $L_0$ and the inductance difference value $L_1$ as
$$
L_0  :={1\over2} (L_d+L_q), \quad L_1  := {1\over2} (L_d - L_q).
$$

The stationary model \eqref{volt_model}, together with the mechanical dynamics, can be expressed in the standard state-space form as follows.
\begequ
\label{sal_pmsm1}
    \begin{aligned}
        L(\theta) {d  \over dt}{\iab}  & =  F(\iab,\theta,\omega) +   \vab\\
        {d  \over dt}{\theta} & = n_p \omega \\
        J{d  \over dt}{\omega} & = n_p \Phi (i_\beta \cos\theta - i_\alpha \sin\theta) - f\omega - T_L,
    \end{aligned}
\endequ
where we define the mapping
$$
F (\cdot)  :=
 \big(2n_p \omega L_1 Q(\theta)\calj - R_s I \big)\iab   + {n_p\omega} \Phi
\left[\begin{aligned}  \sin\theta \\ -\cos\theta \end{aligned}\right].
$$

\noindent {\bf Problem Formulation} (Position Estimation via Signal Injection): Assume there is a stabilizing controller operator $\Sigma_C$ measuring only $\iab$, and define its output as
$$
 \vab^C(t) :=\Sigma_C[\iab(t)].
$$
Inject a high-frequency signal to one axis of the control voltage, say, the $\alpha$-axis, that is,
\begin{equation}
\label{siginj}
\vab = \vab^C +
\begin{bmatrix}
V_h\sin \omega_h t \\ 0
\end{bmatrix},
\end{equation}
where $\omega_h:={2\pi \over\varepsilon}$, with $\varepsilon > 0$ {\em small}, and $V_h>0$. The problem is to define an operator
$$
\Sigma_E:\iab \mapsto \hat \theta
$$
such that
\begequ
\lab{estacc}
\limsup_{t\to\infty} \big| \hat{\theta}(t) - \theta(t)  \big|  \le \mathcal{O}({\varepsilon}),
\endequ
where $\mathcal{O}$ is the uniform big O symbol.\footnote{\rm That is, $f(z,\varepsilon)=\mathcal{O}(\varepsilon)$ if and only if $|f(z,\varepsilon)|\le C\varepsilon$, for a constant $C$ independent of $z$ and $\varepsilon$. Clearly, $\calo(1)$ means the boundedness of a signal.}

It is well-known that high frequency probing signals have almost no effect on the motor mechanical coordinates. However, due to the rotor saliency, it induces different high-frequency responses in the $\alpha$- and $\beta$-axes currents. This fact provides the possibility to recover the angle from the high-frequency components of stator currents.

%
\section{\cfm{\large Frequency Decomposition and Quantitative Analysis of Conventional Methods}}
\label{sec3}
%
In this section, we give the analysis of frequency decomposition of the stator currents $\iab$, which is instrumental for the design and analysis of position estimators.

\subsection{Conventional Frequency Analysis}
First we recall the conventional frequency decomposition in the technique-oriented literature, which relies on the {\em ad-hoc} application of the superposition law \cite{JANetal,NAMbook}. That is, suppose the electrical states {consist} of high-frequency and low-frequency components as
$$
(\cdot)_{\alpha\beta} =(\cdot)_{\alpha\beta}^h + (\cdot)_{\alpha\beta}^\ell.
$$
If $\omega \approx 0$, the current responses can be separated as
\begequ
\label{separation}
\vab^\ell + \vab^h = \big( R_sI + L(\theta){p} \big) (\iab^\ell + \iab^h).
\endequ
For the approximated high-frequency model
$$
\vab^h \approx L(\theta) {p} \iab^h,
$$
neglecting the stator resistance, the angle $\theta$ can be regarded as a {\em constant}, thus the high-frequency response contains the information of $\theta$, namely, for the input \eqref{siginj} we have
$$
\begin{aligned}
i_\alpha^h & = {V_h (L_0-L_1\cos2\theta)  \over  L_dL_q {p}} \big[ \sin\omega_h t\big]\\
i_\beta^h & = - {(V_h L_1\sin2\theta)   \over  L_dL_q {p}} \big[ \sin\omega_h t\big].
\end{aligned}
$$
{Substituting $p={d\over dt}$, we approximately get the high-frequency components of the stator {current} as
\begin{equation}
\label{hfc}
\iab^h =
{1\over \omega_h L_dL_q}
\left[ \begin{aligned}   & - L_1 \cos2\theta +  L_0\\   & - L_1 \sin2\theta \end{aligned} \right]
(-V_h\cos\omega_h t).
\end{equation}}

The derivation of the above high-frequency model is based on two assumptions, namely, the superposition law and the slow angular velocity $\omega\approx 0$, regarding which, the following remarks are in order.
\begin{itemize}
  \item[\bf R1] The dynamics \eqref{sal_pmsm1} is highly nonlinear. It is well-known that nonlinear systems ``mix'' the frequencies, making the superposition law not applicable. Although using the classical decomposition \eqref{separation} to estimate position may work in practice, it fails to reliably provide, neither a  framework for a  quantitative performance assessment, nor guidelines to tune parameters.
      \ \\ \
  \item[\bf R2] The assumption $\omega\approx 0$ implies that the decomposition above is applicable only at standstill or very low speeds.
\end{itemize}

\subsection{Frequency Analysis via Averaging}
\label{sec:32}
Averaging analysis provides a rigorous and elegant decomposition of the measured currents as follows. We refer the reader to \cite{HAL,SANbook} for the basic theory on averaging analysis. Applying averaging analysis, it is shown in \cite{JEBetal} that with $\omega_h >0$ large enough
\begequ
\label{ident1}
\iab = \biab + \varepsilon y_v S  + \mathcal{O} (\varepsilon^2),
\endequ
where, we defined the signal
\begequ
\lab{S}
S(t) := - { V_h \over 2\pi} \cos (\omega_h t),
\endequ
the (so-called) virtual output
\begequ
\lab{yv}
y_v := {1 \over L_dL_q}
\left[ \begin{aligned}   & - L_1 \cos2\theta +  L_0\\   & - L_1 \sin2\theta \end{aligned} \right],
\endequ
and $\biab$ is the current of the closed-loop system with $\vab=\vab^C$---that is, without signal injection. From \eqref{yv} it is clear that {the angular position $\theta$ can be recovered from the virtual output $y_v$, that is}
$$
{\theta} = {1 \over 2} \arctan \left\{ {y_{v_2}\over y_{v_1} - {L_0\over L_dL_q}}\right\}.
$$

Hence the position estimation problem is translated into the estimation of $y_v$. Towards this end, we notice that, from a frequency viewpoint, $\iab$ contains fundamental frequency component $\biab$ and high frequency component $\varepsilon y_v S$. It should be noticed that the high frequency term $\varepsilon y_v S$ coincides with the one in \eqref{hfc}, but the averaging analysis characterizes all the components in $\iab$ quantitatively.

It is natural, then, that to ``reconstruct" $y_v$---out of  measurements of $\iab$---we need to separate these components via some sort of HPF and LPF operations. This is the rationale underlying most of existing position estimators reported in the literature, see \cite{NAMbook} for a recent review.

\vspace{0.15cm}

\noindent{\bf R3} The tiny term $\mathcal{O} (\varepsilon^2)$ in \eqref{ident1} is caused by second-order periodic averaging analysis, which is concerned with solving a perturbation problem in a properly selected time scale.

\vspace{0.15cm}

\noindent{\bf R4} Rigorously, the function $\arctan$, adopted in the paper for convenience, should be replaced by the 2-argument arctangent function $\text{atan2}(\cdot,\cdot)$, the fact widely known in the drive community.
\subsection{Quantitative Results of Conventional Methods}
In \cite{NAMbook} the position estimation method, for low rotation speeds, shown in Fig. \ref{fig:block1} is proposed.
\begin{figure}[]
    \centering
    \includegraphics[width=8.5cm]{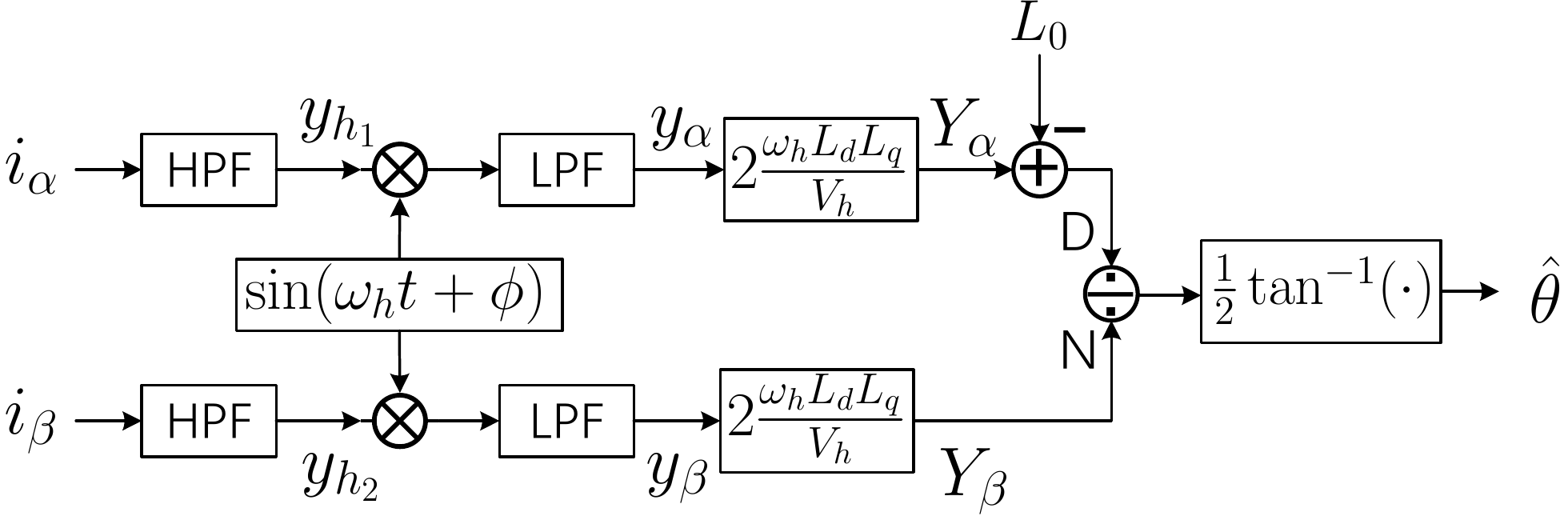}
    \caption{Block diagram of the conventional signal injection method \cite{NAMbook}}
    \label{fig:block1}
\end{figure}
To evaluate the performance of the classical method depicted in Fig. \ref{fig:block1}, without loss of generality, select the LTI filters {as the transfer function}\footnote{\rm Given a smooth signal $y(t)$ and its
Laplace transform $Y(s)$, the following operators in time domain can be written as $\hpf[y(t)] ={2p^2\over (\lambda_h +p)^2}[y(t)]$ and $\lpf[y(t)] = {\lambda_\ell \over \lambda_\ell + p}[y(t)]$. For brevity, we make a slight abuse of notation using both $\hpf[y(t)]$ and $\hpf(s)Y(s)$ below.}
  \begali{
\nonumber
{\tt HPF}(s) &= {2s^2 \over (\lambda_h + s)^2}\\
\lab{filters1}
{\tt LPF}(s) &= {\lambda_\ell \over \lambda_\ell + s},
}
with parameters
\begequ
\label{parameter-LTI}
\lambda_h = \omega_h,\; \lambda_\ell = \max\{\sqrt{\omega_h \omega_\star},1\}.
\endequ
The Bode diagrams of two filters are given in Fig. \ref{fig:bode_H/LPF} with $\omega_h=500$ and $\omega_\star=1$.

\begin{figure}[]
    \centering
    \includegraphics[width=4.25cm,height=4.2cm]{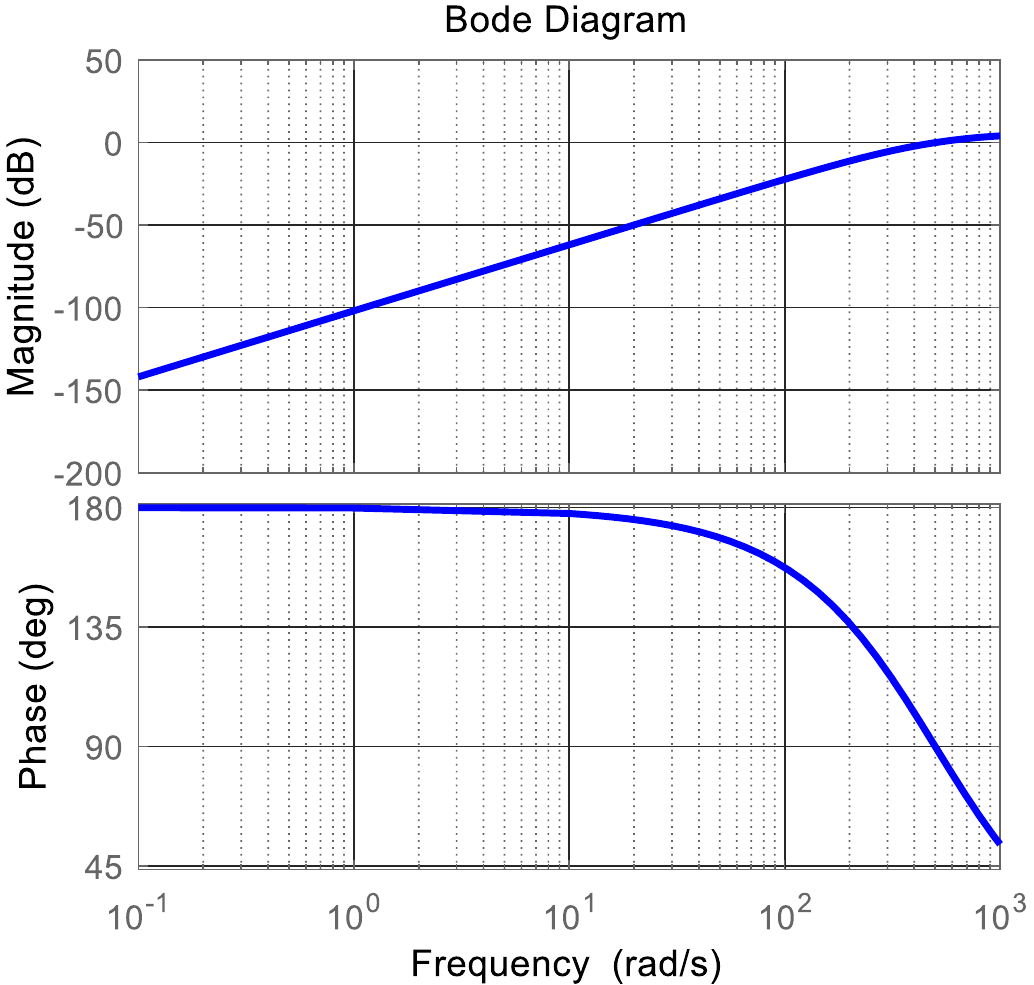}
    \includegraphics[width=4.25cm,height=4.2cm]{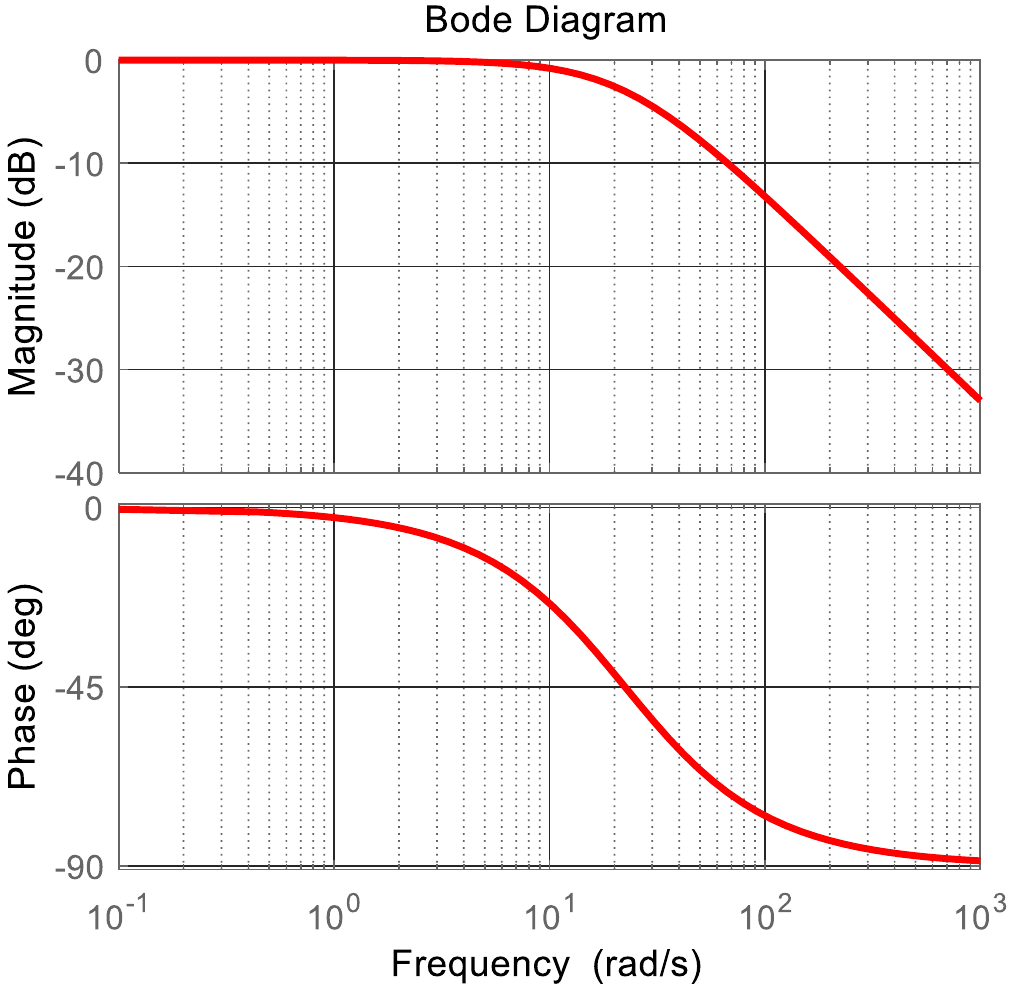}
    \caption{Bode diagram of the HPF/LPF \eqref{filters1} ($\omega_h=500,\;\omega_\star=1$)}
    \label{fig:bode_H/LPF}
\end{figure}

Applying averaging analysis at reduced speeds, and setting $\phi=0$, we have the following.

\begin{proposition}
\label{prop-conventional}\rm
For the IPMSM model \eqref{sal_pmsm1}, suppose the control $\vab^C$ guarantees all the states bounded, with the speed
$$
\big|(\bar\omega, \dot{\bar\omega}, \dot{v}_{\alpha\beta}^C) \big|
\le \ell_\omega
$$
for some constant $\ell_\omega$ independent of $\varepsilon $. If the filters are selected as \eqref{filters1}-\eqref{parameter-LTI}, then the signal processing procedure depicted in Fig. \ref{fig:block1}, namely,
\begequ
\label{sig_proc}
\begin{aligned}
   y_h & = \hpf[\iab]\\
   Y & = 2{\omega_h \over V_h}L_dL_q \lpf \big[y_{h} \sin(\omega_ht+\phi)\big] \\
   \hat{\theta} & = {1\over2} \arctan \bigg\{ {Y_\beta \over Y_\alpha - L_0} \bigg\}
\end{aligned}
\endequ
with $\phi=0$ and
$$
Y:=\col(Y_\alpha, Y_\beta),
$$
guarantees
$$
\limsup_{t\to\infty}\big|\Hat{\theta}(t) - \theta(t)\big| = n\pi + \mathcal{O}(\varepsilon^{1\over2})
$$
for $n\in\mathbb{Z}$, when $\omega_h \ge \omega_h^\star$ for some $\omega_h^\star >0$, with $\varepsilon= {2\pi \over \omega_h}$.
\end{proposition}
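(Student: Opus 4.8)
The plan is to push the averaging identity \eqref{ident1} through the three operations in \eqref{sig_proc} and track the error orders. Before that I would record the regularity facts needed downstream. Since $\vab^C$, $\biab$, $\bar\omega$ are bounded and $L(\theta)$ is uniformly invertible ($\det L(\theta)=L_dL_q>0$), the state equation \eqref{sal_pmsm1} gives $\dot{\biab}$ bounded; differentiating it once more and invoking $|(\dot{\bar\omega},\dot v_{\alpha\beta}^C)|\le\ell_\omega$ gives $\ddot{\biab}$ bounded as well, and likewise $y_v=y_v(\theta)$ has $\dot y_v,\ddot y_v$ bounded because $\dot\theta=n_p\bar\omega$ and $\dot{\bar\omega}$ are bounded. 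I would also note that $\hpf$ and $\lpf$ in \eqref{filters1} are BIBO stable with $\mathcal{L}_\infty$-gains bounded uniformly in $\omega_h$ (at most $2$ and $1$), so their transients decay exponentially and only their steady-state action matters for the $\limsup$.

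\emph{HPF stage.} By linearity, $y_h=\hpf[\biab]+\varepsilon\,\hpf[y_vS]+\mathcal{O}(\varepsilon^2)$. Because $\hpf(s)=2s^2/(s+\omega_h)^2$ has a double zero at the origin, one can write $\hpf[\biab]=\tfrac{2}{(p+\omega_h)^2}[\ddot{\biab}]$, whose $\mathcal{L}_\infty$-norm is at most $(2/\omega_h^2)\|\ddot{\biab}\|_\infty=\mathcal{O}(\varepsilon^2)$, so the low-frequency current leaks only negligibly. For $\hpf[y_vS]$ I would exploit the two-time-scale structure: writing $\hpf$ via its impulse response (a feedthrough $2\delta$ plus a strictly proper tail concentrated on the time scale $1/\omega_h$) and Taylor-expanding the slow factor $y_v(t-\tau)=y_v(t)-\dot y_v(t)\tau+\mathcal{O}(\tau^2)$ inside the convolution yields $\hpf[y_vS]=y_v\,\hpf[S]+\mathcal{O}(1/\omega_h)=y_v\,\hpf[S]+\mathcal{O}(\varepsilon)$. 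A direct evaluation gives $\hpf(j\omega_h)=j$, whence $\hpf[S]=\hpf\!\left[-\tfrac{V_h}{2\pi}\cos\omega_h t\right]=\tfrac{V_h}{2\pi}\sin\omega_h t$ in steady state. Collecting terms, $y_h=\varepsilon\,\tfrac{V_h}{2\pi}\,y_v\sin\omega_h t+\mathcal{O}(\varepsilon^2)$ asymptotically.

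\emph{Demodulation and LPF stage.} Multiplying by $\sin\omega_h t$ and using $\sin^2=\tfrac12(1-\cos2\omega_h t)$ gives $y_h\sin\omega_h t=\varepsilon\tfrac{V_h}{4\pi}y_v-\varepsilon\tfrac{V_h}{4\pi}y_v\cos2\omega_h t+\mathcal{O}(\varepsilon^2)$. Now apply $\lpf(s)=\lambda_\ell/(\lambda_\ell+s)$ with $\lambda_\ell=\max\{\sqrt{\omega_h\omega_\star},1\}$: the slow term satisfies $\lpf[y_v]=y_v-\tfrac{1}{\lambda_\ell+p}[\dot y_v]=y_v+\mathcal{O}(1/\lambda_\ell)=y_v+\mathcal{O}(\varepsilon^{1/2})$, while $|\lpf(2j\omega_h)|=\lambda_\ell/\sqrt{\lambda_\ell^2+4\omega_h^2}=\mathcal{O}(\lambda_\ell/\omega_h)=\mathcal{O}(\varepsilon^{1/2})$ makes the ripple term $\mathcal{O}(\varepsilon^{3/2})$ after filtering. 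Multiplying by $2\tfrac{\omega_h}{V_h}L_dL_q$ and using $\omega_h\varepsilon=2\pi$, all $\mathcal{O}(\varepsilon^2)$ leftovers (from \eqref{ident1}, from $\hpf[\biab]$, and from the envelope-commutation correction) become $\mathcal{O}(\varepsilon)$, the ripple becomes $\mathcal{O}(\varepsilon^{1/2})$, and one is left with $Y=L_dL_q\,\lpf[y_v]+\mathcal{O}(\varepsilon^{1/2})=L_dL_q\,y_v+\mathcal{O}(\varepsilon^{1/2})$; by \eqref{yv} this reads $Y_\alpha-L_0=-L_1\cos2\theta+\mathcal{O}(\varepsilon^{1/2})$ and $Y_\beta=-L_1\sin2\theta+\mathcal{O}(\varepsilon^{1/2})$.

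\emph{Extraction of the angle, and the main obstacle.} For an IPMSM $L_d\ne L_q$, so $|L_1|>0$ and the vector $(Y_\alpha-L_0,Y_\beta)$ stays at distance $\ge|L_1|-\mathcal{O}(\varepsilon^{1/2})$ from the origin; on that set the two-argument arctangent (the map actually meant in \eqref{sig_proc}, cf. R4) is Lipschitz, hence $\hat\theta=\tfrac12\,\mathrm{atan2}(Y_\beta,Y_\alpha-L_0)=\theta+n\pi+\mathcal{O}(\varepsilon^{1/2})$ once $\omega_h\ge\omega_h^\star$ large enough, the $\pi$-periodic ambiguity being the unavoidable sign ambiguity produced by the doubled angle. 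I expect the delicate part to be the rigorous handling of the two non-LTI operations — the demodulation $y_h\mapsto y_h\sin\omega_h t$, and, upstream of it, the commutation of $\hpf$ and $\lpf$ with the slowly varying envelope $y_v$ — together with the bookkeeping confirming that $\mathcal{O}(\varepsilon^{1/2})$ is genuinely the worst term. That rate is structural, not an artefact of the estimates: faithful transmission of $y_v$ demands $\lambda_\ell$ large (distortion $\sim 1/\lambda_\ell$) while rejection of the $2\omega_h$ ripple demands $\lambda_\ell$ small (residue $\sim\lambda_\ell/\omega_h$), and the choice $\lambda_\ell=\sqrt{\omega_h\omega_\star}$ in \eqref{parameter-LTI} is precisely the geometric mean balancing the two at $\mathcal{O}(\sqrt{\omega_\star/\omega_h})=\mathcal{O}(\varepsilon^{1/2})$ — exactly the accuracy loss the new method is designed to remove.
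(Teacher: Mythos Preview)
Your proposal is correct and follows essentially the same route as the paper: apply $\hpf$ to the averaging identity \eqref{ident1}, show $\hpf[\biab]=\mathcal{O}(\varepsilon^2)$ via the double zero at the origin, commute $\hpf$ with the slow envelope $y_v$ (the paper does this by expanding $p^2[D(\theta)\cos\omega_h t]$ explicitly, you use an impulse-response Taylor argument, but the content is identical), demodulate, and then bound the two LPF contributions as $\mathcal{O}(\varepsilon^{3/2})$ via what the paper calls the ``swapping lemma''---precisely your identity $\lpf[y_v]=y_v-\tfrac{1}{\lambda_\ell+p}[\dot y_v]$. Your closing remark that $\lambda_\ell=\sqrt{\omega_h\omega_\star}$ is the geometric mean balancing the $1/\lambda_\ell$ distortion against the $\lambda_\ell/\omega_h$ ripple, hence the $\mathcal{O}(\varepsilon^{1/2})$ rate is structural, is a nice clarification that the paper leaves implicit.
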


{Before presenting the proof, let us say a few words regarding the proposition intuitively. The widely popular LTI filtering technique for signal injection, illustrated in Fig. \ref{fig:block1}, is compactly expressed in \eqref{sig_proc}. To analyze it quantitatively, we make a mild assumption of bounded angular velocity and its time derivative. Proposition \ref{prop-conventional} figures out the steady-state accuracy $\mathcal{O}(\varepsilon^{\hal})$ of the conventional LTI filtering method with the suggested parameters.}

\begin{proof}
Applying the operator $\hpf$ to \eqref{ident1}, we have\footnote{\rm We omit the exponentially decaying term $\et$ of filtered signals in the following analysis.}
\begequ
\label{hpf1}
\begin{aligned}
   \hpf[\iab] & = \hpf[\biab] + {2\pi\over\omega_h}\hpf[D(\theta)S] + \mathcal{O}(\varepsilon^2) + \et,
\end{aligned}
\endequ
with the definition
$$
D(\theta) := L^{-1}(\theta)\begmat{1\\0}.
$$
For the first term of \eqref{hpf1}, we have
$$
\begin{aligned}
 \hpf[\biab]
  = &  {2 \over (w_h+p)^2}[r_1(t)] \\
 r_1(t)  := &
    {\partial \calf \over \partial \biab}\cdot (\calf+ L^{-1}\vab^C)
   +\Big( {\partial \calf \over \partial \bar{\omega}}+ L^{-1}\Big)
  \mathcal{O}(\ell_\omega)
  \\
  & +  n_p\bar{\omega}
   \Big({\partial \calf\over \partial\bar{\theta}} +
  {\partial L^{-1} \over \partial \bar\theta} \vab^C \Big)   \\
\end{aligned}
$$
where we have used the assumption
$$
\big|(\bar\omega, \dot{\bar\omega}, \dot{v}_{\alpha\beta}^C) \big|
\le \ell_\omega
$$
in the last term, with
$$
\calf(\iab,\theta,\omega) := L^{-1}(\theta)
F(\iab,\theta,\omega).
$$
 There always exists a constant $\omega_h^\star \in \rea_+$ such that for $\omega_h > \omega_h^\star$
$$
 \hpf[\biab]  = {2 \over \omega_h^2}\cdot {\omega_h^2 \over (\omega_h + p)^2} \big[\mathcal{O}(1)
   \big].
$$
Some basic linear system analysis shows
$$
\left| {\omega_h^2 \over (\omega_h + p)^2} \big[\mathcal{O}(1)] \right| = \mathcal{O}(1),
$$
 thus yielding
$$
 \hpf[\biab] = \mathcal{O}(\varepsilon^2).
$$

For the second term in the right hand side of \eqref{hpf1}, we have
$$
\begin{aligned}
   {2\pi\over\omega_h}\hpf[D(\theta)S]
      =  &- {2V_h \over \omega_h} {1 \over (\omega_h + p)^2}[r_2(t)]
     \\
     r_2(t):= &
   a_1 \cos(\omega_ht) + a_2 \omega_h \sin(\omega_ht)
   \\ &
   - a_3 \omega_h\sin(\omega_ht)
   - \omega_h^2 D(\theta)\cos(\omega_ht).
\end{aligned}
$$
with
$$
\begin{aligned}
a_1  :={d\over dt}(n_p\omega D'(\theta)) , \;
a_2  :=n_p\omega D'(\theta), \;
a_3  :=\omega_h D'(\theta)n_p\omega,
\end{aligned}
$$
the derivatives of which are all bounded. If the parameter $\omega_h$ is large enough, we have
$$
 {2\pi\over\omega_h}\hpf[D(\theta)S] = {1 \over \omega_h}V_h D(\theta)\sin(\omega_ht) +  \mathcal{O}(\varepsilon^2).
$$

Therefore, the currents filtered by the HPFs become
\begequ
\label{filter_hpf}
\begin{aligned}
    y_{h}& := \hpf[\iab] \\
    &
    ={1 \over \omega_h}V_h D(\theta)\sin(\omega_ht) + \mathcal{O}(\varepsilon^2).
\end{aligned}
\endequ
Multiplying $\sin(\omega_ht + \phi)$ on both sides with $\phi=0$, we get
\begequ
\label{filter_sin}
    \sin(\omega_ht)y_h = {V_h \over 2\omega_h}D(\theta) - {V_h \over 2\omega_h} D(\theta) \cos(2\omega_h t) + \mathcal{O}(\varepsilon^2),
\endequ
where we have used the trigonometric identity
$$
\sin^2\theta = {1 \over 2} (1-\cos2\theta).
$$
Applying the LPF to \eqref{filter_sin}, for the first term we have
$$
\begin{aligned}
    \lpf\bigg[ {V_h \over 2\omega_h}D(\theta)\bigg]
    = {V_h \over 2\omega_h}D(\theta) +
    \mathcal{O}(\varepsilon^{3\over2}).
\end{aligned}
$$
For the second term, we have
$$
    \lpf\bigg[ {V_h \over 2\omega_h} D(\theta) \cos(2\omega_h t)\bigg] =
    \mathcal{O}(\varepsilon^{3\over2}),
$$
with straightforward calculations and the swapping lemma.

Therefore, the filtered signal satisfies
$$
    \begin{bmatrix} y_\alpha \\ y_\beta \end{bmatrix}  := \lpf[\sin(\omega_ht)y_h]
     = {V_h \over 2\omega_h}D(\theta) +
    \mathcal{O}(\varepsilon^{3\over2}).
$$
Notice the explicit form of $D(\theta)$, thus we having
$$
\begin{aligned}
\left[\begin{aligned}
 Y_\alpha\\
 Y_\beta
\end{aligned}\right] &:=
\left[\begin{aligned}
 {2\omega_hL_dL_q \over V_h}y_\alpha\\
 {2\omega_hL_dL_q \over V_h}y_\beta
\end{aligned}\right]\\
& \;
=
\left[\begin{aligned}
 L_0 - L_1\cos2\theta \\
 - L_1\sin 2\theta
\end{aligned}\right]
+\mathcal{O}(\varepsilon^{1\over2}).
\end{aligned}
$$

It is obvious that when time $t$ goes to infinity, the identity
$$
\tan2\theta
 = {Y_\beta \over Y_\alpha - L_0} +\mathcal{O}(\varepsilon^{1\over2})
$$
holds. Thus,
$$
\theta = {1 \over 2} \arctan \bigg\{ {Y_\beta \over Y_\alpha - L_0}\bigg\} + {n\pi} + \mathcal{O}(\varepsilon^{1\over2}) + \et,
$$
with $n\in\mathbb{Z}$. It completes the proof.
\end{proof}

The saliency-tracking-based method has an angular ambiguity of $\pi$. It is possible to utilize the saturation effect in $d$-axis of machine, as well as $y_v$ to conduct the magnetic polarity identification. The problem is out of scope of the paper, and we refer the readers to \cite{JANetaltia,JEOetal} for more details. {It should be underscored that the above quantitative analysis does not rely the constant speed assumption, making it also applicable to transient stages in low speed region.}

When the IPMSM is working at standstill, the estimation accuracy at the steady stage becomes also $\mathcal{O}(\varepsilon)$. A corollary at standstill is given as follows.
\begin{corollary}
\rm
For Proposition \eqref{prop-conventional} with $\omega \equiv 0$, we have
$$
\limsup_{t\to\infty}\big|\Hat{\theta}(t) - \theta(t)\big| = n\pi + \mathcal{O}(\varepsilon)
$$
with $n\in\mathbb{Z}$.
\end{corollary}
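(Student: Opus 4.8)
The plan is to re-run the proof of Proposition~\ref{prop-conventional} essentially verbatim while imposing $\omega\equiv 0$, and simply to track which error terms collapse. Since $\dot\theta=n_p\omega=0$, the rotor angle is frozen, $\theta(t)\equiv\theta_0$, so $L(\theta)$, $Q(\theta)$, $L^{-1}(\theta)$ and $D(\theta)=L^{-1}(\theta)\begmat{1\\0}$ are all time-invariant. First I would observe that the three coefficients appearing in $r_2(t)$, namely $a_1=\frac{d}{dt}\big(n_p\omega D'(\theta)\big)$, $a_2=n_p\omega D'(\theta)$ and $a_3=\omega_h D'(\theta)n_p\omega$, each carry a factor $\omega$ and hence vanish identically, so that $r_2(t)=-\omega_h^2 D(\theta)\cos(\omega_h t)$ exactly and $\frac{2\pi}{\omega_h}\hpf[D(\theta)S]=\frac{V_h}{\omega_h}D(\theta)\sin(\omega_h t)$ with no $\mathcal{O}(\varepsilon^2)$ correction; the bound $\hpf[\biab]=\mathcal{O}(\varepsilon^2)$ is unaffected, since its derivation invokes only boundedness of the closed-loop states and of $\dot v_{\alpha\beta}^C$, which still hold. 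Hence $y_h=\hpf[\iab]=\frac{V_h}{\omega_h}D(\theta)\sin(\omega_h t)+\mathcal{O}(\varepsilon^2)$ with $D(\theta)$ \emph{constant}.

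Demodulating as in the proposition, $\sin(\omega_h t)\,y_h=\frac{V_h}{2\omega_h}D(\theta)-\frac{V_h}{2\omega_h}D(\theta)\cos(2\omega_h t)+\mathcal{O}(\varepsilon^2)$, and the improvement now lives entirely in the LPF step. Because $\theta$ is frozen, the target $\frac{V_h}{2\omega_h}D(\theta)$ is a genuine constant, so $\lpf\big[\frac{V_h}{2\omega_h}D(\theta)\big]=\frac{V_h}{2\omega_h}D(\theta)$ exactly (unit DC gain, with no swapping-lemma residual of order $\dot D(\theta)/\lambda_\ell$ — which was precisely the $\mathcal{O}(\varepsilon^{3/2})$ term in the proposition). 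Moreover, at standstill the reference speed entering \eqref{parameter-LTI} is $\omega_\star=0$, hence $\lambda_\ell=\max\{0,1\}=1=\mathcal{O}(1)$, and $|\lpf(2j\omega_h)|=\lambda_\ell/\sqrt{\lambda_\ell^2+4\omega_h^2}=\mathcal{O}(\varepsilon)$; combined with the $\mathcal{O}(\varepsilon)$ amplitude $\frac{V_h}{2\omega_h}D(\theta)$ this gives $\lpf\big[\frac{V_h}{2\omega_h}D(\theta)\cos(2\omega_h t)\big]=\mathcal{O}(\varepsilon^2)$. Since also $\lpf[\mathcal{O}(\varepsilon^2)]=\mathcal{O}(\varepsilon^2)$, we obtain $y:=\lpf[\sin(\omega_h t)y_h]=\frac{V_h}{2\omega_h}D(\theta)+\mathcal{O}(\varepsilon^2)$.

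Finally, scaling by $\frac{2\omega_h L_dL_q}{V_h}=\mathcal{O}(\varepsilon^{-1})$ yields $Y=L_dL_q D(\theta)+\mathcal{O}(\varepsilon)=\begmat{L_0-L_1\cos2\theta\\ -L_1\sin2\theta}+\mathcal{O}(\varepsilon)$, and since $\arctan$ (understood as $\text{atan2}$) is locally Lipschitz on the range of interest, $\hat\theta=\frac12\arctan\{Y_\beta/(Y_\alpha-L_0)\}=\theta+n\pi+\mathcal{O}(\varepsilon)$ as $t\to\infty$, which is the claim. I would also point out that the residual cannot be improved below $\mathcal{O}(\varepsilon)$: the second-order averaging remainder $\mathcal{O}(\varepsilon^2)$ in \eqref{ident1} generally has content near $\omega_h$ that the HPF passes with $\mathcal{O}(1)$ gain, is folded down to DC by the multiplication by $\sin(\omega_h t)$, and is then amplified by the $\mathcal{O}(\varepsilon^{-1})$ rescaling, producing exactly an $\mathcal{O}(\varepsilon)$ contribution.

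\textbf{Main obstacle.} The delicate point is the bookkeeping of the $2\omega_h$ ripple through the LPF. In Proposition~\ref{prop-conventional} that ripple is only $\mathcal{O}(\varepsilon^{3/2})$ — because there $\lambda_\ell=\sqrt{\omega_h\omega_\star}$ is an intermediate bandwidth of order $\varepsilon^{-1/2}$ — and this, together with the swapping-lemma distortion of the time-varying $D(\theta)$, is exactly what caps the conventional method at $\mathcal{O}(\varepsilon^{1/2})$. The corollary hinges on recognizing that the parameter rule \eqref{parameter-LTI} collapses to $\lambda_\ell=1$ at standstill ($\omega_\star=0$), so the ripple is suppressed by a full order $\varepsilon$, while a frozen $\theta$ simultaneously removes the low-frequency distortion term. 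Once these two observations are in place, the remainder is a routine re-run of the proposition's computation with every $\omega$-proportional term set to zero.
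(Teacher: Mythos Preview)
Your proposal is correct and follows essentially the same approach as the paper: the paper's entire proof is the single line ``It follows clearly with $\lambda_\ell = 1$,'' and your argument is precisely the detailed unpacking of that remark --- you identify that the parameter rule \eqref{parameter-LTI} collapses to $\lambda_\ell=1$ at standstill, that a frozen $\theta$ removes the swapping-lemma residual on the DC term, and that the $2\omega_h$ ripple is then suppressed by a full $\mathcal{O}(\varepsilon)$ rather than $\mathcal{O}(\varepsilon^{1/2})$. Your additional observations (vanishing of $a_1,a_2,a_3$, and the remark that the $\mathcal{O}(\varepsilon)$ bound is tight because of the $\mathcal{O}(\varepsilon^{-1})$ rescaling of the residual $\mathcal{O}(\varepsilon^2)$ averaging term) go beyond what the paper states but are consistent with it.
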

\begin{proof}
It follows clearly with $\lambda_\ell = 1$.
\end{proof}

{The above corollary underscores that the position estimation at standstill admits the same order of accuracy with the one of the proposed virtual output estimator. This is because the angular position $\theta$ degenerates as a constant for such a case.}

%
\section{\cfm{\large Proposed Estimation Method}}
\label{sec4}
%
%
\subsection{New Design}

In this section, we propose a new estimator following the methodology in \cite{YIetalcst}. Before presenting the new design, we define three ${\tt BIBO}$-stable, linear operators,
\begin{itemize}
  \item first, the delay operator $\cald_d$, with parameter $d>0$,
  \begequ
\lab{hd}
\mathcal{D}_{{d}}[u(t)]= u(t-d);
\endequ
    \item second, the weighted zero-order-hold operator $\calz_w$, parameterized by $w>0$, and defined as
\begequ
\lab{zw}
\begin{aligned}
\dot{\chi}(t) & = u(t) \\
\calz_w[u(t)] & ={1\over w} \big[ \chi(t) - \chi(t- w) \big];
\end{aligned}
\endequ
    \item third, the linear time-varying (LTV) operator $\hgrad$ defined as
\begequ
\label{hgrad}
\begin{aligned}
    \dot{x}(t) & = - \gamma S^2(t)x(t) +  \gamma S(t)u(t) \\
        \hgrad[u(t)] & = {1 \over \varepsilon } x(t),
\end{aligned}
\endequ
where $\gamma>0$ is a tuning gain.
\end{itemize}
These operators are instrumental for the following design.

To construct the estimator, apply the first two operators to the currents as follows,
\begequ
\lab{Y}
Y_f(t): = (\mathcal{D}_d  - \mathcal{Z}_{2d})[i_{\alpha\beta}(t)].
\endequ
We make the observation that, using the Laplace transform, the action of \eqref{Y} may be represented in the frequency domain as
$$
Y_f(s)=G_d(s) \iab(s),
$$
where we defined the transfer function
\begequ
\lab{gd}
 G_d(s) := e^{-ds} +\dfrac{1}{2ds}\left( e^{-2ds} -  1\right).
\endequ
The description of the estimator is completed applying the third operator $\hgrad$ to $Y_f$ to generate the estimate of $y_v$, denoted as $\hat y_v$, that is,
\begequ
\label{hatyv}
        \hat{y}_v(t) = \hgrad[Y_f(t)].
\endequ

See Fig. \ref{fig1}. {The measured stator current $\iab$ is first filtered by the transfer function $G_d(s)$ defined in \eqref{gd}, then going through a gradient descent operator $\mathcal{G}_{\tt grad}$, and we will get the estimate of the virtual output $y_v$. As discussed in Section \ref{sec:32}, it is equivalent to obtaining the angular position $\theta$, which is our target. This is the signal processing procedure of our new design, the properties of which will be introduced below.}

\begin{figure}[h]
    \centering
    \includegraphics[width=9cm]{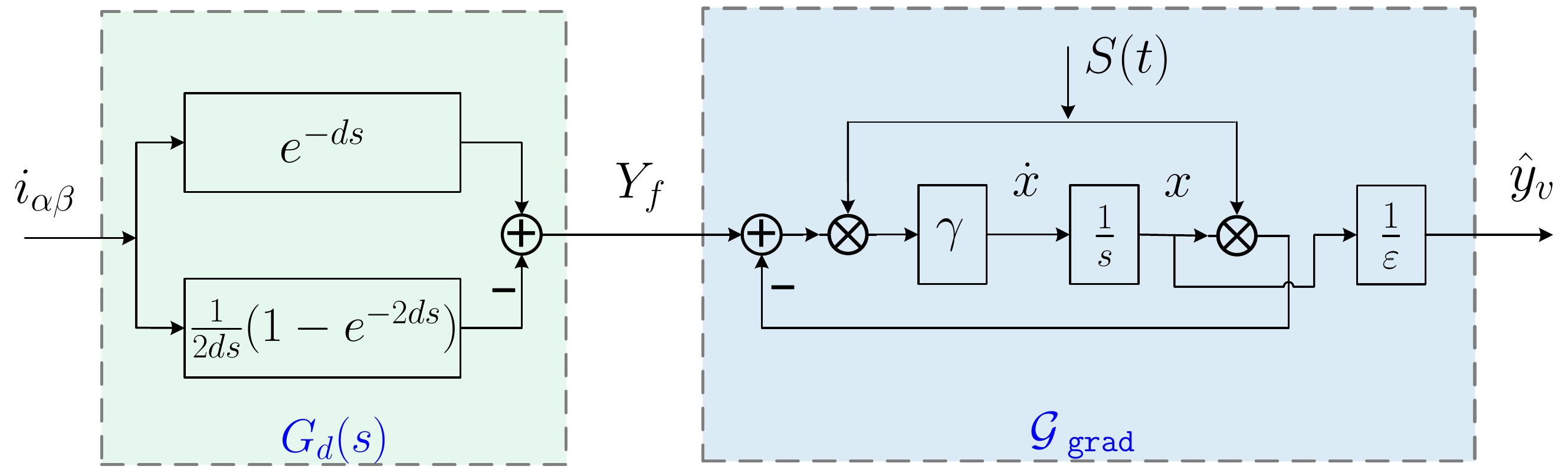}
    \caption{Block diagram of the proposed estimation method}
    \label{fig1}
\end{figure}

Using the analysis reported in \cite{YIetalcst}, with $d=\varepsilon$ it is shown that the estimator \eqref{Y}, \eqref{hatyv} verifies
$$
\lim_{t\to \infty} |\hat{y}_v(t) - y_v(t) | \le \mathcal{O}({\varepsilon}).
$$
A rigorous statement is given as follows.

\begin{proposition}
\label{prop-new-design}\rm
For the dynamical model of IPMSMs \eqref{sal_pmsm1}, suppose the control $\vab^C$ guarantees all states bounded and the speed
\begequ
\label{ass-dotyv}
|\dot{y}_v| \le \ell_v
\endequ
for some constant $\ell_v$, there exist constants $\omega_h^\star, \gamma^\star >0$ such that for $\omega_h> \omega_h^\star$ and $\gamma > {\gamma^\star \over\varepsilon}$, the estimate satisfies
$$
\limsup_{t\to\infty} \Big|\hgrad^\gamma \circ G_d(s) [\iab(t)] - y_v(t)  \Big| = \mathcal{O}(\varepsilon)
$$
where $y_v$ is defined in \eqref{yv} with $d=\varepsilon$.
\end{proposition}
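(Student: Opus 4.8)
The plan is to start from the averaging identity \eqref{ident1}, namely $\iab = \biab + \varepsilon y_v S + \calo(\varepsilon^2)$, and to show that the composite operator $\hgrad \circ G_d(s)$ extracts $y_v$ from the high-frequency term while annihilating the slow part $\biab$ up to $\calo(\varepsilon)$. The first step is to analyze the action of $G_d(s)$, defined in \eqref{gd}, on each of the three pieces. On the high-frequency term $\varepsilon y_v S$ with $S(t) = -{V_h\over 2\pi}\cos(\omega_h t)$ and $d=\varepsilon = {2\pi\over\omega_h}$: the delay operator $\cald_d$ acting on $\cos(\omega_h t)$ shifts phase by $\omega_h d = 2\pi$, so $\cald_\varepsilon$ returns essentially the same sinusoid; the weighted zero-order-hold $\calz_{2d}$ acting on a pure sinusoid of period $\varepsilon=d$ integrates over exactly two periods and hence returns (to leading order) zero. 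Thus $G_d(s)[\varepsilon y_v S] = \varepsilon y_v S + \calo(\varepsilon^2)$, the $\calo(\varepsilon^2)$ collecting the mismatch coming from the slow time-variation of $y_v$ (here the assumption \eqref{ass-dotyv}, $|\dot y_v|\le\ell_v$, is used, via the swapping lemma, to commute $y_v$ through the delay/hold operators at the cost of an extra $\varepsilon$). On the slow term $\biab$: since $G_d(s)$ has a zero at $s=0$ (indeed $G_d(0) = 1 + {1\over 2d}\cdot(-2d)\cdot\frac{1}{1}$... more precisely $\lim_{s\to0}G_d(s)=1 + \lim_{s\to0}\frac{e^{-2ds}-1}{2ds} = 1-1 = 0$), and more importantly $G_d(s) = \calo(ds)$ for small $ds$, applying $G_d$ to the bounded, slowly-varying signal $\biab$ (whose derivative is $\calo(1)$ by the boundedness hypothesis on the closed loop) yields $G_d(s)[\biab] = \calo(d) = \calo(\varepsilon)$. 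Combining, $Y_f = \varepsilon y_v S + \calo(\varepsilon^2)$ plus a residual $\calo(\varepsilon)$ slow term.

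The second step is the gradient operator $\hgrad$ of \eqref{hgrad}, which is the real engine. Feeding $Y_f$ into the dynamics $\dot x = -\gamma S^2 x + \gamma S Y_f$ and recalling that the leading part of $Y_f$ is $\varepsilon y_v S$, we get $\dot x = -\gamma S^2 x + \gamma \varepsilon S^2 y_v + \gamma S\cdot(\text{residuals})$. Writing $z := x - \varepsilon y_v$, this becomes $\dot z = -\gamma S^2 z - \varepsilon \dot y_v + \gamma S\cdot(\text{residuals})$. The key point is that $S^2(t) = {V_h^2\over 8\pi^2}(1+\cos 2\omega_h t)$ has positive average $\bar{S^2} = {V_h^2\over 8\pi^2}>0$, so the linear time-varying system $\dot z = -\gamma S^2 z$ is exponentially stable with a rate proportional to $\gamma$; this is where the requirement $\gamma > \gamma^\star/\varepsilon$ enters — it makes the contraction fast enough (rate $\calo(1/\varepsilon)$ or faster) to dominate the $\calo(1)$ time-variation of $y_v$ and to average out the oscillatory $\cos 2\omega_h t$ factor. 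A standard averaging/ISS estimate then gives $\limsup_t |z(t)| = \calo(\varepsilon^2)$ for the driving term $\varepsilon\dot y_v$ divided by the rate $\gamma\bar{S^2} \gtrsim 1/\varepsilon$, wait — more carefully, $|z| \lesssim \frac{\varepsilon\ell_v}{\gamma\bar{S^2}} + (\text{contribution of the }\calo(\varepsilon)\text{ residual times }\gamma S)$. One has to be careful that the $\calo(\varepsilon)$ slow residual from $G_d(s)[\biab]$, once multiplied by $\gamma S$ and passed through the filter with gain $\sim 1/(\gamma\bar{S^2})$, contributes $\calo(\varepsilon)$ to $x$, hence $\calo(1)$... — no: $S$ has zero average, so $\gamma S \cdot (\text{slow residual})$ averages to $\calo(\varepsilon)\cdot$(small) and, divided by the rate, stays $\calo(\varepsilon)$ in $x$; I will need the swapping lemma again here. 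Finally $\hat y_v = {1\over\varepsilon}x = {1\over\varepsilon}(\varepsilon y_v + z) = y_v + \calo(\varepsilon)$, which is the claim. Throughout, exponentially decaying transients $\et$ are dropped as in the proof of Proposition~\ref{prop-conventional}.

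The main obstacle is the careful bookkeeping in the $\hgrad$ step: one must apply second-order averaging to a \emph{linear time-varying} system whose contraction rate $\gamma\bar{S^2}$ is itself large (of order $1/\varepsilon$), while the perturbations entering it ($\varepsilon\dot y_v$ and the $\calo(\varepsilon)$ residual modulated by $S$) are of mixed orders, and show that every contribution ends up $\calo(\varepsilon^2)$ in $x$, equivalently $\calo(\varepsilon)$ in $\hat y_v$. This is exactly the type of estimate carried out in \cite{YIetalcst}, so the cleanest route is to verify that the hypotheses of the relevant theorem there are met — boundedness and slow variation of $y_v$ (guaranteed by \eqref{ass-dotyv} and the closed-loop boundedness assumption), the persistency-of-excitation-like condition $\bar{S^2}>0$ (immediate from \eqref{S}), and the order relation between $\gamma$, $\varepsilon$ and $\omega_h$ — and then invoke it. A secondary, more mechanical obstacle is establishing the two filtering facts about $G_d(s)$: that it kills the double-period-averaged sinusoid exactly at $d=\varepsilon$, and that its high-pass-like behaviour $G_d(s)=\calo(ds)$ near $s=0$ gives $G_d(s)[\biab]=\calo(\varepsilon)$; both follow from direct computation with \eqref{hd}, \eqref{zw}, \eqref{gd} together with the swapping lemma to handle the slowly time-varying coefficients $y_v(t)$ and $\theta(t)$.
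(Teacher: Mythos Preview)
Your overall plan---decompose $\iab$ via \eqref{ident1}, analyse $G_d$ on each piece, then study the error dynamics of $\hgrad$---is exactly the paper's route. However there is a genuine gap in the $G_d$ step that breaks your order bookkeeping.

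You claim $G_d(s)=\calo(ds)$ near $s=0$ and hence $G_d(s)[\biab]=\calo(\varepsilon)$. In fact $G_d$ has a \emph{double} zero at the origin: expanding \eqref{gd},
\[
G_d(s)=e^{-ds}+\frac{e^{-2ds}-1}{2ds}
=\Big(1-ds+\tfrac{(ds)^2}{2}\Big)+\Big(-1+ds-\tfrac{2(ds)^2}{3}\Big)+\cdots
=-\tfrac{(ds)^2}{6}+\cdots,
\]
equivalently, $(\cald_d-\calz_{2d})[f](t)=f(t-d)-\frac{1}{2d}\int_{t-2d}^{t}f=\calo(d^2)$ is just the midpoint-rule error for smooth $f$. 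This is precisely the ``Taylor expansion'' the paper invokes to obtain $Y_f(t)=S(t)\,\theta_v(t-d)+\calo(\varepsilon^2)$ with $\theta_v=\varepsilon y_v$, i.e.\ the slow part $\biab$ contributes $\calo(\varepsilon^2)$, not $\calo(\varepsilon)$.

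This missing factor is not cosmetic. With only an $\calo(\varepsilon)$ slow residual $r(t)$, the forcing $\gamma S\,r$ in the error equation is $\calo(1)$ once $\gamma>\gamma^\star/\varepsilon$. Your rescue via ``$S$ has zero average'' does not recover the lost order: the contraction rate of $\dot z=-\gamma S^2 z$ is $\gamma\bar{S^2}\sim\omega_h$, comparable to the oscillation frequency of $S$, so the filter attenuates the $S$-modulated forcing by only one factor of $\varepsilon$, yielding $z=\calo(\varepsilon)$ and hence $\hat y_v-y_v=z/\varepsilon=\calo(1)$. With the correct $\calo(\varepsilon^2)$ residual, the forcing is $\gamma\cdot\calo(\varepsilon^2)=\calo(\varepsilon)$, the same PE argument (the paper phrases it as persistent excitation of $S$ plus Krasovskii's theorem and perturbation analysis) gives $\tilde\theta_v=\calo(\varepsilon^2)$, and dividing by $\varepsilon$ yields the claimed $\calo(\varepsilon)$. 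So the fix is simple---replace $\calo(ds)$ by $\calo((ds)^2)$ and drop the averaging salvage---but without it the argument does not close.
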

\begin{proof}
We give a brief outline here. With the Taylor expansion, we can obtain the time-varying regressor
\begequ
\label{prop2-eq1_A}
Y_f(t)  = S(t) \theta_v(t-d) + \mathcal{O}(\varepsilon^2),
\endequ
with
$$
Y_f(t): = G_d(s)[\iab(t)],\quad \theta_v(t) = \varepsilon y_v(t).
$$
Define the error signal
$$
\Tilde{\theta}_v:= \Hat{\theta}_v - \varepsilon y_v .
$$
Invoking \eqref{prop2-eq1_A} and Assumption \eqref{ass-dotyv}, if $\gamma>{\gamma^\star \over \varepsilon}$ we get
\begin{equation}
\label{theta2dyn}
\dot{\tilde{\theta}}_v
    := - \gamma S^2(t)\tilde{\theta}_v +\mathcal{O}\big(\varepsilon\big).
\end{equation}

Clearly, the signal $S(t)$ is of persistent excitation, that is,
$$
    \int_{t}^{t+ {1 \over \varepsilon}}S^2(\tau) d\tau \ge S_0,
$$
for all $t \ge 0$ and some $S_0>0$. Invoking Krasovskii's theorem and carrying-out some basic perturbation analysis, it completes the proof.
\end{proof}

Thus, defining the angle estimate as
\begequ
\label{angle_est}
\hat{\theta} = {1 \over 2} \arctan \left\{ {\hat y_{v_2}\over \hat y_{v_1} - {L_0\over L_dL_q}}\right\},
\endequ
the required asymptotic accuracy \eqref{estacc} is achieved.

\vspace{0.15cm}

\noindent{\bf R5} {The above analysis shows the accuracy enhancement of the new design, compared with the conventional LTI filtering technique. It should be pointed out that all the operators used in the new design are \emph{linear}, making the algorithm highly efficient. Indeed, the proposed algorithm can be regarded as a linear time-varying (LTV) operator.}

\subsection{Parameter Tuning}

For the implementation of the proposed estimator, there are three tunable parameters, namely, $\gamma$, $\omega_h$ and $V_h$. We are in position to give some discussions on their selections.

\
\begin{itemize}
    \item[\bf{D1}] A larger gain $\gamma$ yields a faster convergence speed of the estimation error. The performance assessment, including the transient and steady-state stages, of the virtual output estimates is instrumental for its tuning.

    \vspace{0.3cm}

    \item[\bf{D2}] For the frequency parameter $\omega_h$, there is a tradeoff between the estimation accuracy and the sensitivity to unavoidable high-frequency measurement noises \cite{PETetal}. On one hand, a higher frequency increases the accuracy. On the other hand, the measurement is
    \begequ
\label{ident_noise}
\iab = \biab + \varepsilon y_v S  + \mathcal{O} (\varepsilon^2) + \nu,
    \endequ
where $\nu$ is the measurement noise. Thus, a higher frequency will decrease the signal-to-noise ratio.

    \vspace{0.3cm}

    \item[\bf{D3}] The amplitude $V_h$ shares some similar effects on the estimation performance with $\omega_h$, due to \eqref{ident_noise}, since, a larger amplitude $V_h$ increases the signal-to-noise ratio. {However, if the probing amplitude $V_h$ is sufficiently large, the oscillation in mechanical coordinates is not neglectable.}
\end{itemize}

\subsection{Model with Magnetic Saturation}
{To clarify the underlying mechanism of conventional LTI filtering methods we consider in this paper the basic IPMSM model \eqref{volt_model}. The proposed analysis may be adapted, almost {\em verbatim}, to more general models, {\em e.g.}, the model with magnetic saturation, which we proceed to discuss below.

Considering the flux linkage $\lambda\in \rea^2$ and Faraday's law, we have
\begequ
\label{eq:lambda}
\dot{\lambda} = \vab - R\iab,
\endequ
which holds for the IPMSM models. The stator current satisfies the constitutive relation
\begequ
\label{H_E}
\iab = \nabla_\lambda H_E(\lambda, \theta),
\endequ
where $H_E(\lambda,\theta)$ is the magnetic energy stored in the inductors. The two equations above hold true independently of the consideration of magnetic saturation, whose effect is captured in the energy function. The model of the {\em unsaturated} IPMSM model \eqref{volt_model} is completed replacing
\begequ
\label{eq:qua}
H_E(\lambda, \theta) = {1\over 2} [\lambda - \textbf{c}(\theta)\Phi]^\top  L^{-1}(\theta)[\lambda - \textbf{c}(\theta)\Phi]
\endequ
with $\textbf{c}(\theta):= \col(\cos\theta, \sin\theta)$, with \eqref{eq:lambda}-\eqref{H_E}. As pointed out in \cite{JEBetal} under high-load condition, magnetic saturation should be taken into account in the model. In such a case, the magnetic energy, is more complicated than the quadratic form \eqref{eq:qua}, see \cite{BASetal,JEBetal,LI,BIA,SER,GUGetal} for more details.

The analysis of virtual outputs extraction in Sections \ref{sec3}-\ref{sec4} is \emph{exactly the same} for \eqref{eq:qua} and {\em non-quadratic} magnetic energy $H_E(\lambda,\theta)$. The difference appears on the way how to recover the angular position $\theta$ from $y_v$. For the latter, it can be formulated as the following optimization problem
$$
\hat{\theta} = \min_{\hat\theta \in [0,2\pi)} \left| \hat{y}_v - \nabla^2_\lambda H_E(\lambda,\theta)
\begin{pmatrix} 1 \\0  \end{pmatrix}
  \right|,
$$
whose analytical solution, if it exists, depends on the particular modeling of magnetic saturation. See \cite{JEBetal} for the discussions about saturated models.
}
%
\section{\cfm{\large Frequency Interpretation of New Estimator}}
\label{sec5}
%
We have shown that the new estimator \eqref{estacc} achieves performance enhancement with a higher accuracy. In this section we show the structural and functional similarities between two methods.

The new estimator proposed in this paper \emph{exactly} coincides with the block diagram in Fig. \ref{fig:block1} assigning $\phi={3\pi\over 2}$ and the filters as follows
\begali{
\nonumber
{\tt HPF} &= \mathcal{D}_d - \mathcal{Z}_{2d}\\
\lab{filters2}
{\tt LPF} &= {1\over 2} {\bigg({V_h\over 2\pi}\bigg)}^2 \calh,
}
where $\calh$ is the single-input single-output LTV filter
\begequ
\lab{grad}
\begin{aligned}
\dot{z}(t) & = -\gamma S^2(t)z(t)+ \gamma  u(t)\\
\calh[u(t)] & =z(t).
\end{aligned}
\endequ
See  Fig. \ref{fig3}.

\begin{figure}[h]
    \centering
    \includegraphics[width=9cm]{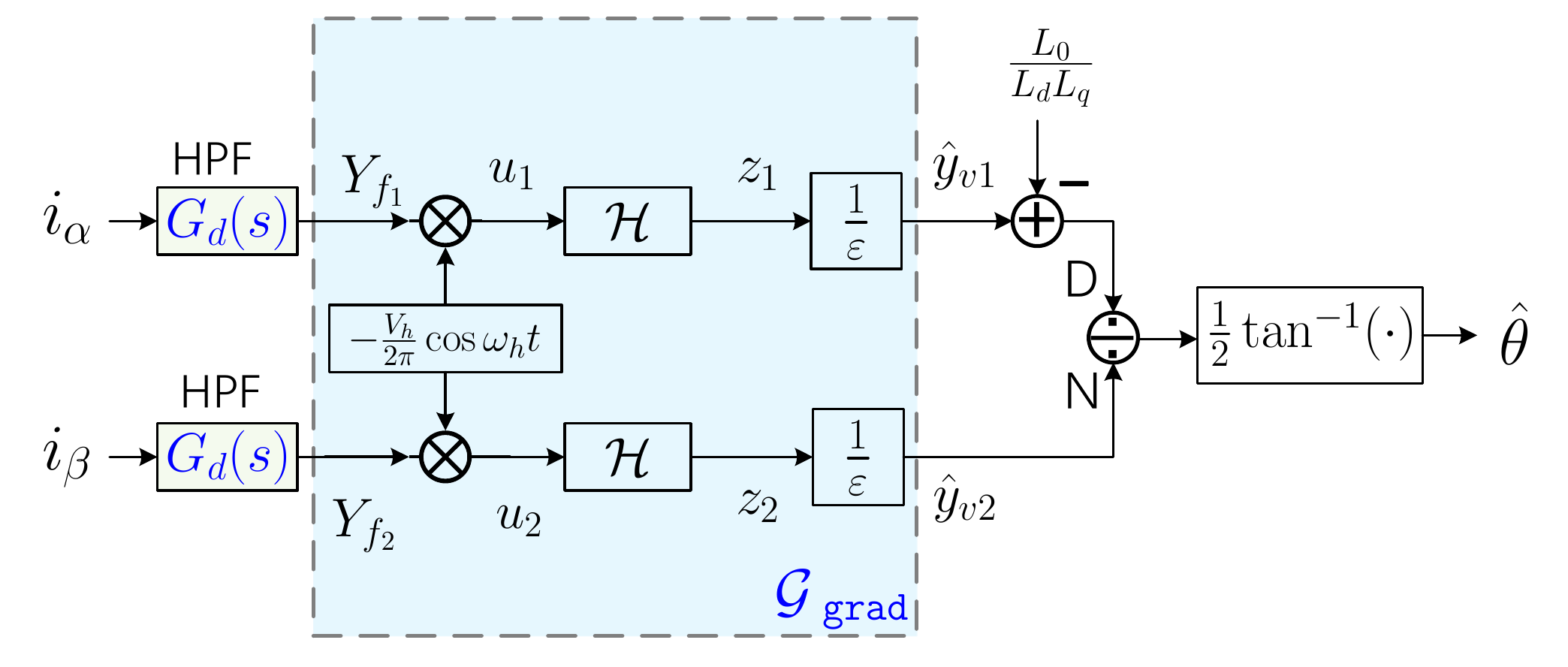}
    \caption{Equivalent block diagram of the proposed estimation method}
    \label{fig3}
\end{figure}

To illustrate the high-pass and low-pass filtering properties of \eqref{filters2}, \eqref{grad} we show in Fig. \ref{fig:bode_HPF2} the  Bode diagram of the transfer function $G_d(s)$, defined in \eqref{gd}. From the figure it is clear that $G_d(s)$ verifies a high-pass property.
\begin{figure}[h]
    \centering
    \includegraphics[width=5cm,height=4.5cm]{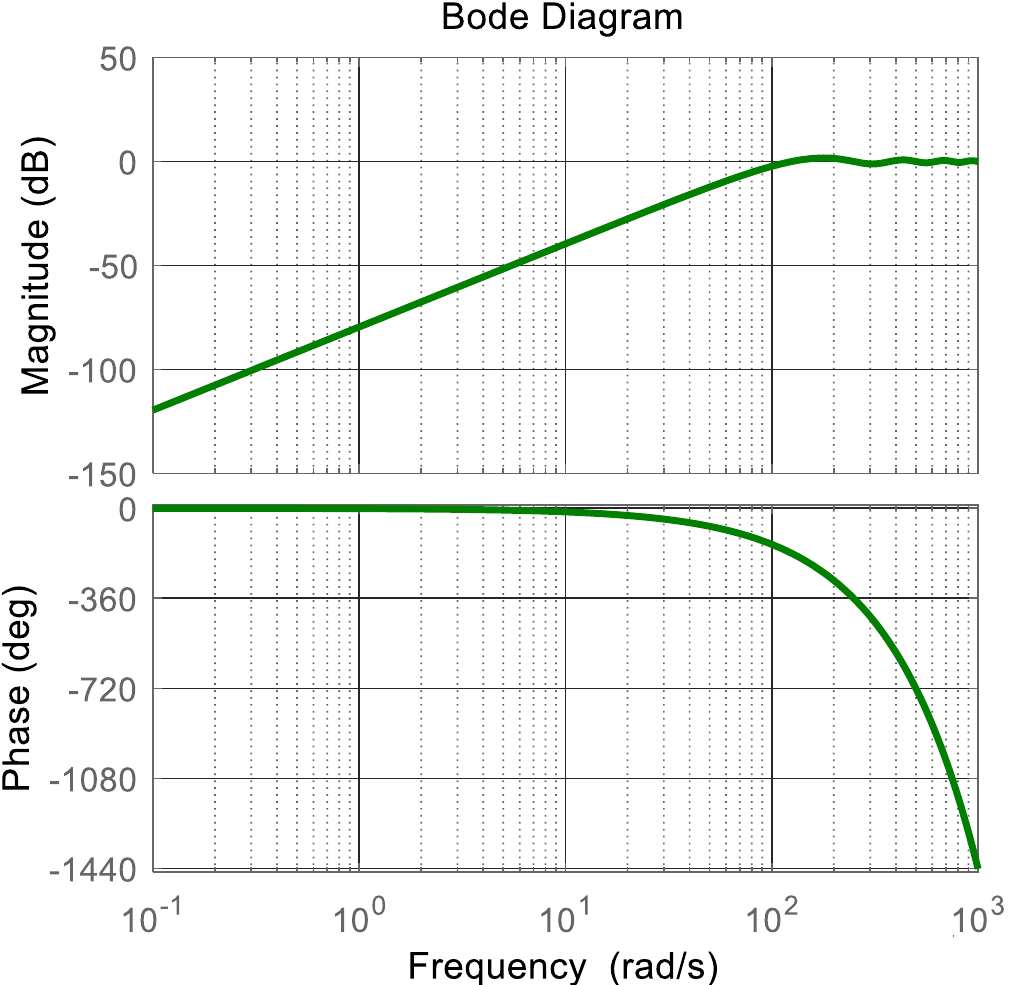}
    \caption{Bode diagram of the transfer function $G_d(s)$ ($\omega_h=500,\; n=2$)}
    \label{fig:bode_HPF2}
\end{figure}

The frequency response of the operator $\mathcal{H}$, by fixing $\gamma=1$ {and $V_h=1$}, is the same as the linear time periodic (LTP) system below
\begequ
\label{ltpsyst}
 \dot{y} = - (\cos\omega_ht)^2 y +u.
\endequ

Introducing the change of state coordinate
$$
y(t)= {\bar{P}^{-1}(t)}r(t)
$$
with
\begequ
\label{funp}
\bar{P}(t)  := \exp\left(-{1\over 4\omega_h}\sin2\omega_ht \right),
\endequ
we get the LTV system
\begequ
\label{LTV}
\begin{aligned}
    \dot{r}(t)  & = -{1\over2}r(t) +  \bar{P}(t)u(t).
\end{aligned}
\endequ

Therefore, the LTP system \eqref{ltpsyst} can be represented as in Fig. \ref{fig:LTP}. From \eqref{funp} it is clear that the matrix $\bar{P}(t)$ is almost identity for sufficiently large $\omega_h$, while the transfer function ${1\over s +0.5}$ admits the low-pass property for a large $\omega_h$. Hence, the LTP system \eqref{ltpsyst}, as well as the operator $\mathcal{H}$, is an LPF.
\begin{figure}[h]
    \centering
    \includegraphics[width=5cm]{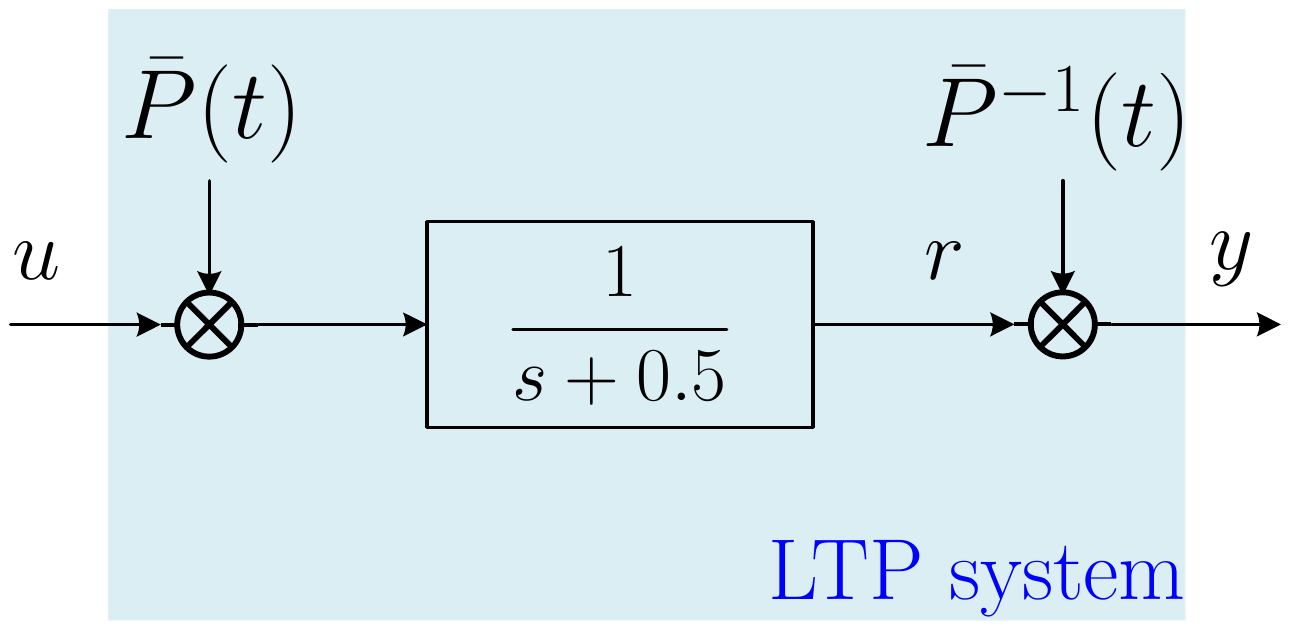}
    \caption{Equivalent block diagram of the LTP system \eqref{ltpsyst}.}
    \label{fig:LTP}
\end{figure}
%
In Fig. \ref{fig:block1}, the signal $\sin(\omega_ht+ \phi)$, entering before the LPF, has different parameters $\phi$ for the classical design and the proposed one, which are $0$ and ${3\pi\over2}$, respectively. As shown in Figs. \ref{fig:bode_H/LPF} and \ref{fig:bode_HPF2}, this is caused by the different phase lags ($+{\pi \over 2}$ and $-2\pi$) of the second-order LTI filter \eqref{filters1} and the delayed LTI filter $G_d(s)$ at the frequency $\omega_h$.
%

%
\section{\cfm{\large Simulations and Experiments}}
\label{sec6}
%
\subsection{Simulations}
The proposed estimator is first tested by means of simulations in Matlab/Simulink. We use the parameters of Table \ref{tab:1}, the current-feedback controller $\Sigma_C$ given below, together with the proposed estimator.

\begin{enumerate}
  \item[1)] Position estimator in Fig. \ref{fig1} with \eqref{angle_est}.
  \item[2)] Rotation between $\alpha\beta$-coordinates and misaligned $dq$-coordinates, namely,
    $$
  i_{dq} = e^{-\calj \hat\theta}\iab, \quad \vab = e^{\calj \hat\theta}v_{dq}.
  $$
  \item[3)] Speed regulation PI loops
  $$
  i_{dq}^\star = \left( K_p + K_i {1\over p}\right) (\omega^\star - \hat\omega),
  $$
  where $\omega^\star$ is the reference speed, and $\hat{\omega}$ is an estimate of the rotor speed obtained via the following PLL-type estimator.
\begali{
\nonumber
\dot{\eta}_1 & = K_p (\hat{\theta} - \eta_1) + K_i\eta_2\\
\nonumber
\dot{\eta}_2 & = \hat{\theta} - \eta_1 \\
\nonumber
\hat{\omega}_p & = K_p(\hat{\theta} -\eta_1) + K_i \eta_2 \\
\label{PLL}
\hat{\omega} & = {1\over n_p} \hat{\omega}_p.
}
  \item[4)] Current regulation loops
  $$
  \begin{aligned}
    v_d & = \left( K_p + K_i {1\over p}\right)(i_d^\star - i_d^\ell) - L{n_p}\hat\omega i_q \\
    v_q & = \left( K_p + K_i {1\over p}\right)(i_q^\star - i_q^\ell) + L{n_p}\hat\omega i_d + {n_p}\hat\omega \Phi,
  \end{aligned}
  $$
  where $i_{dq}^\ell$ are filtered signals of $i_{dq}$ by some LPFs.
  \end{enumerate}

\begin{table}[h]
\centering
\caption{Parameters of the IPMSM: {Simulation (First Column) and \\ Experiments (Second Column)}}
\label{tab:1}
\renewcommand\arraystretch{1.6}
\begin{tabular}{l|r|r}
\hline\hline
 Number of {pole pairs} ($n_p$) &  6 & 3  \\
 PM flux linkage constant ($\Phi$) [Wb] &0.11& 0.39\\
 $d$-axis inductance ($L_d$) [mH] & 5.74 & 3.38 \\
 $q$-axis inductance ($L_q$) [mH] & 8.68 & 5.07\\
 Stator resistance ($R_s$) [$\Omega$]& 0.43 & 0.47\\
 Drive inertia ($J$) [kg$\cdot\text{m}^2$] & 0.01 & $\ge $ 0.01\\
 \hline\hline
\end{tabular}
\end{table}

We operate the motor at the slow speed of 30 rad/min with $T_L=0.5$ N$\cdot$m and the parameters {$\varepsilon=10^{-3},\; \gamma=10^{4},\; V_h=1,\; \omega^\star=0.5$ and those in Table. \ref{tab:2}.
 \begin{table}[h]
\centering
\caption{Parameters of the controller and the PLL estimator}
\label{tab:2}
\renewcommand\arraystretch{1.6}
\begin{tabular}{l|r}
\hline\hline
 $[K_p,K_i]$ in the speed loop & $[1,5]$\\
  \hline
 $[K_p,K_i]$ in the current loop & $[5,5]$\\
  \hline
 $[K_p,K_i]$ in the PLL estimator & $[5,0.01]$\\
 \hline\hline
\end{tabular}
\end{table}
Fig. \ref{fig:simulation} shows the simulation results. In Fig. \ref{fig:siml1}, we also give the position estimate obtained from the conventional LTI filters, denoted $\hat{\theta}_{\mathtt{LTI}}$. {Considering the root-mean-square deviation (RMSD)
$$\mathtt{RMSD} = \sqrt{ {1\over t_2-t_1}\int_{t_1}^{t_2} |\hat{\theta}(s) - \theta(s)|^2 ds }
$$
with $\theta,\hat{\theta} \in \mathcal{S}^1$, we calculate the RMSDs for two methods in the interval $[5,10]$ s. They are  0.0872 and 0.1411 for the proposed design and the conventional LTI filtering method, respectively.} We conclude that the new design outperforms the conventional LTI filtering method with a higher accuracy. It is also observed that the sensorless control law regulates the angular velocity at the desired value. In Figs. \ref{fig:siml5}-\ref{fig:siml6}, we also test the position estimator, as well as closed-loop performance, with different injection frequencies, 500 Hz and 600 Hz, verifying the claim in \textbf{D2}.
\begin{figure}[ht]
\centering
\subfigure[The angle $\theta$ and its estimates (test frequency 1000 Hz)]{
    \includegraphics[width=0.22\textwidth,height=0.15\textwidth]{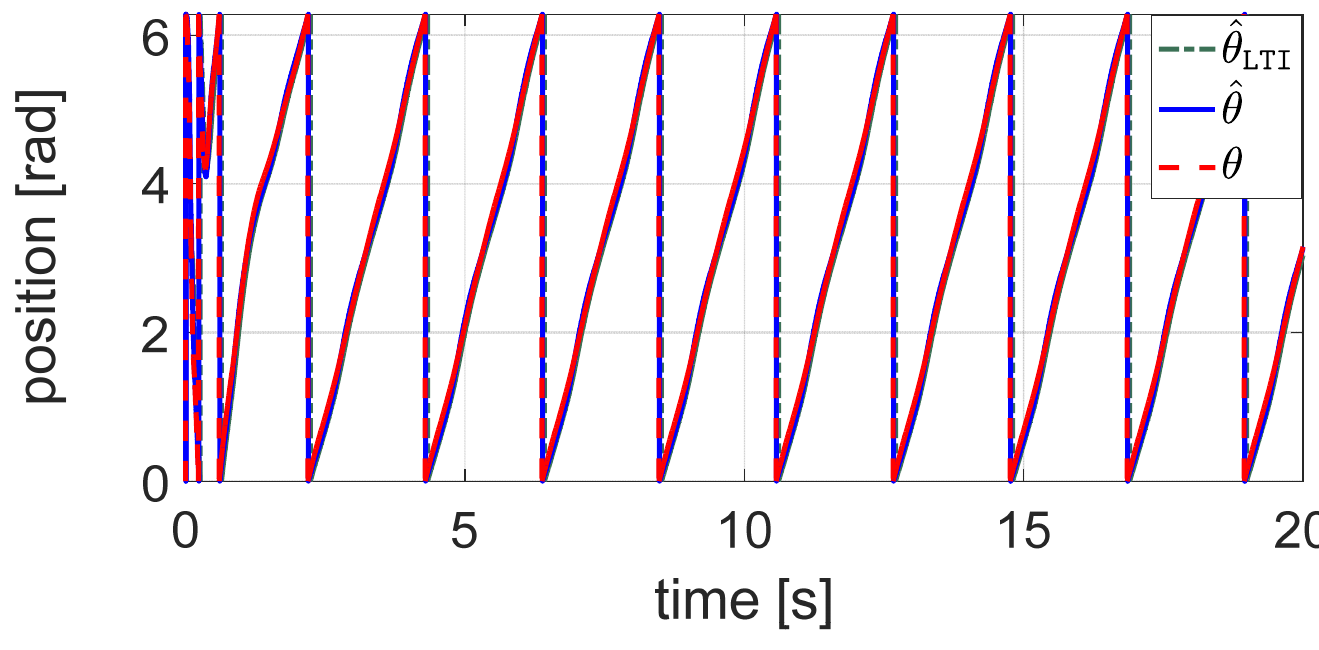}
    \label{fig:siml1}
}
\subfigure[{The angle $\theta$ and its estimates (test frequency 1000Hz)}]{
    \includegraphics[width=0.22\textwidth,height=0.15\textwidth]{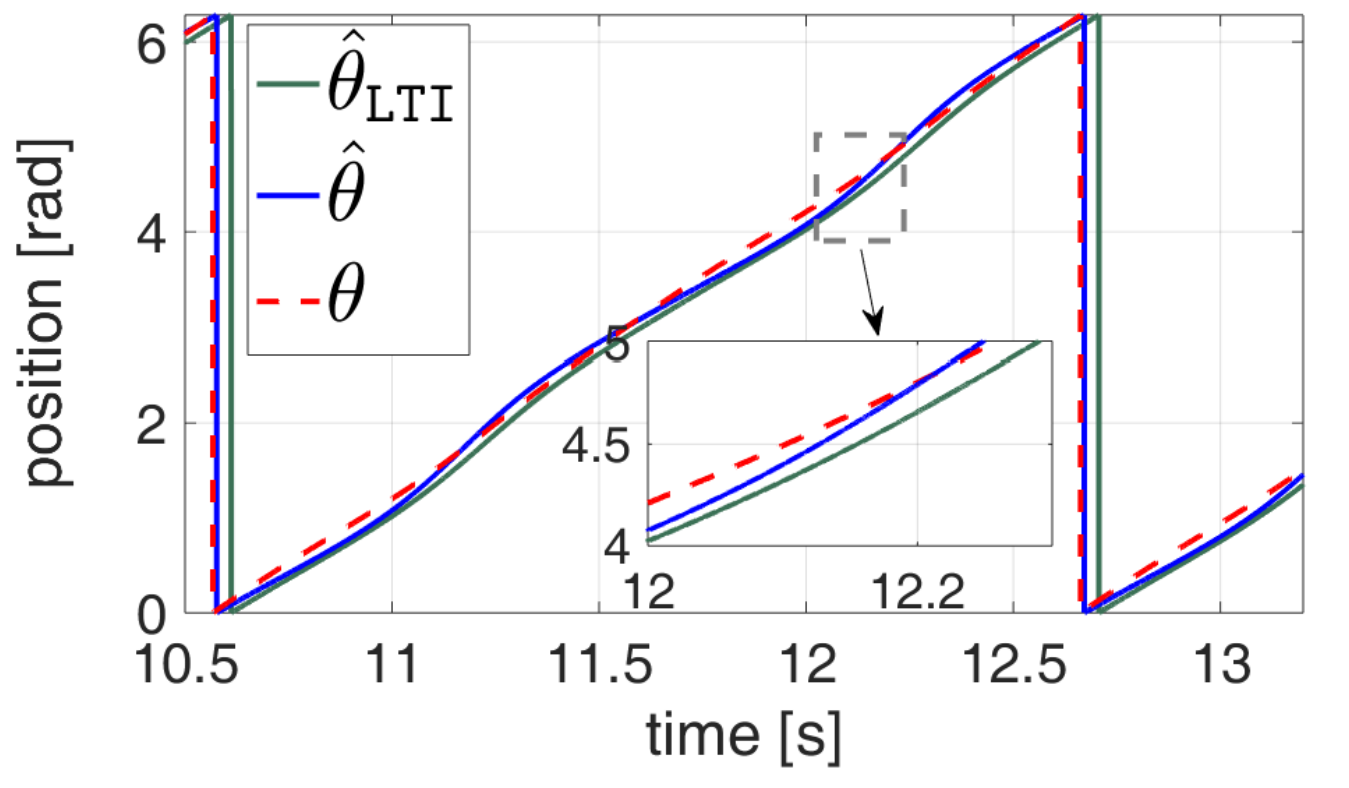}
    \label{fig:siml1_2}
}
\subfigure[Angular velocity and its estimate]{
    \includegraphics[width=0.22\textwidth,height=0.15\textwidth]{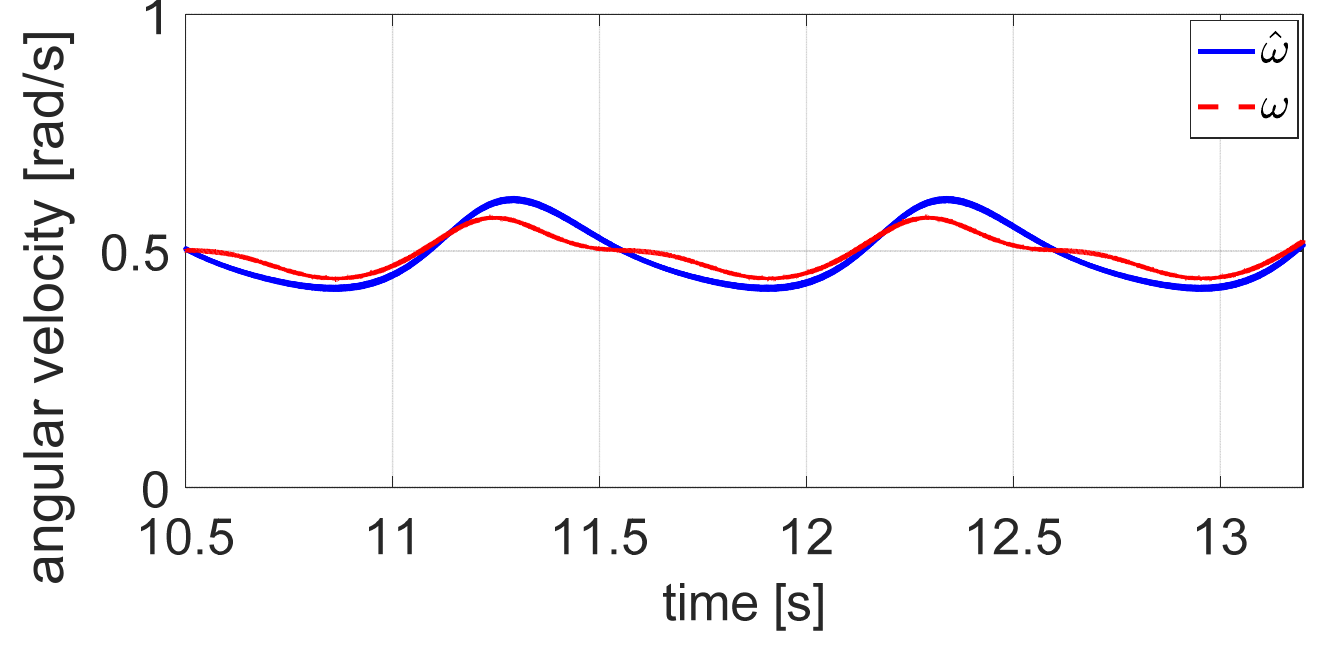}
    \label{fig:siml2}
}
\subfigure[Stator currents $\iab$]{
    \includegraphics[width=0.22\textwidth,height=0.15\textwidth]{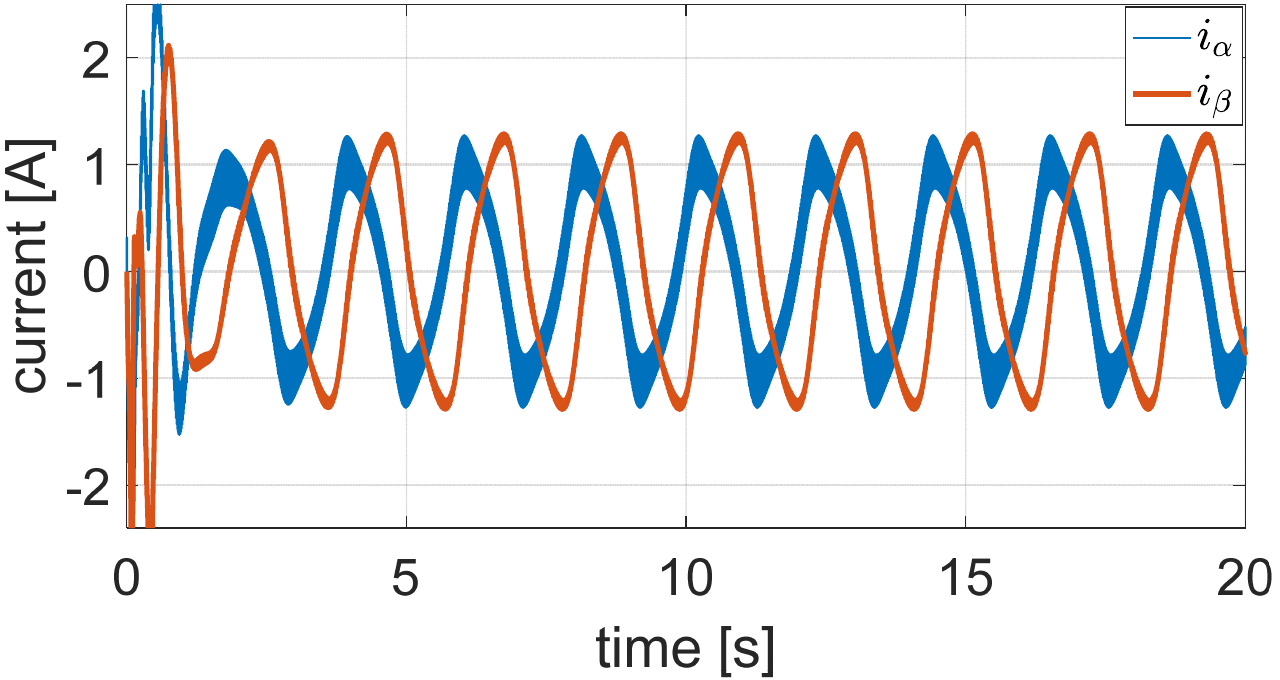}
    \label{fig:siml3}
}
\subfigure[The angle $\theta$ and its estimates (test frequency 600 Hz)]{
    \includegraphics[width=0.22\textwidth,height=0.15\textwidth]{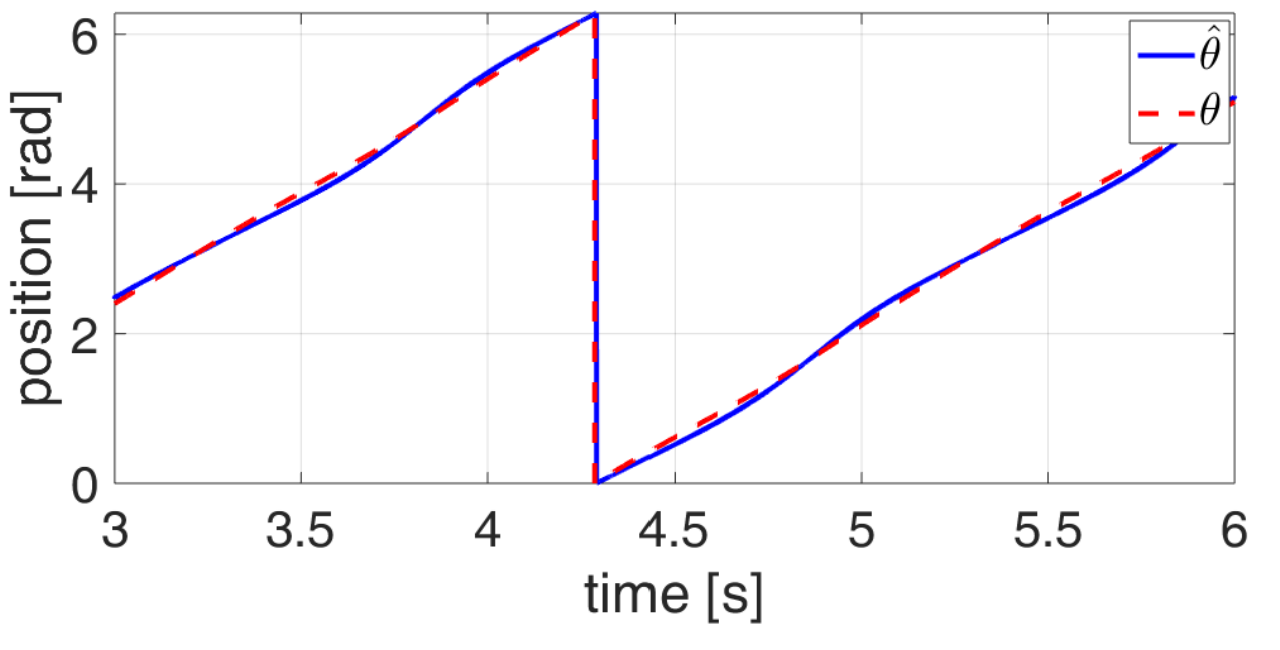}
    \label{fig:siml5}
}
\subfigure[The angle $\theta$ and its estimates (test frequency 500 Hz)]{
    \includegraphics[width=0.22\textwidth,height=0.15\textwidth]{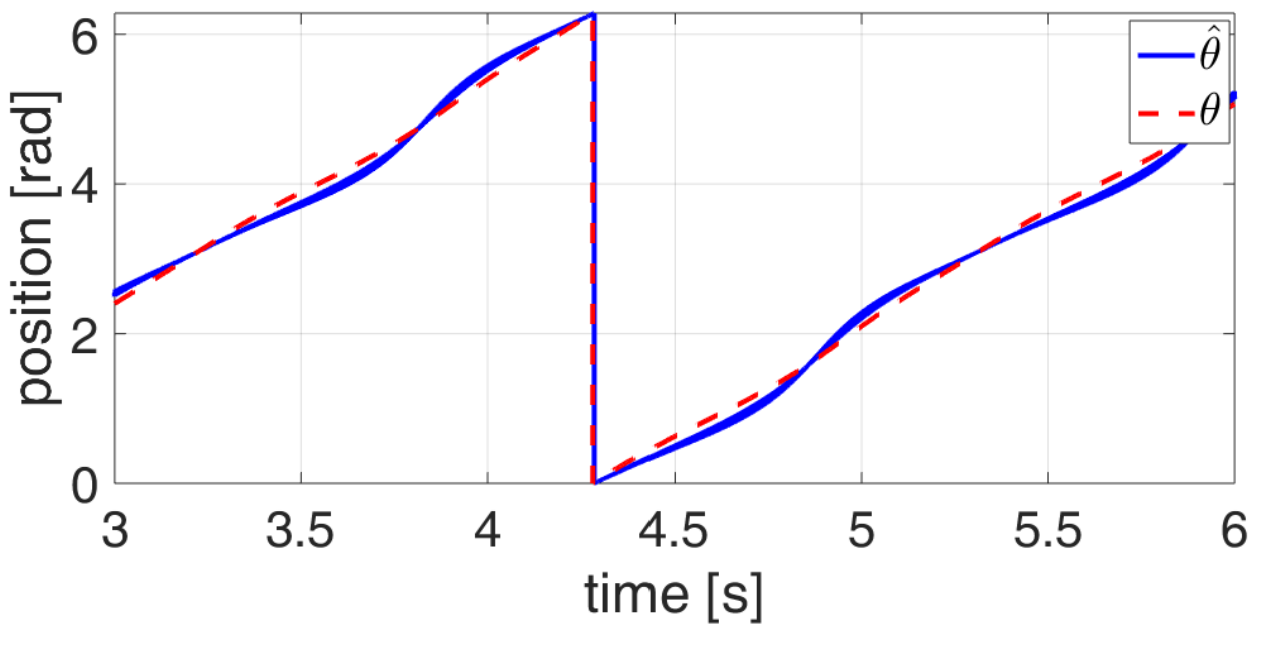}
    \label{fig:siml6}
}
\caption{Simulation results}
\label{fig:simulation}
\end{figure}

\subsection{Experiments}
\ \\ \
\textbf{Losses and Compensations.} Before introducing the experimental results, let us say something about the loss of phase shift, which is unavoidable, as well as its compensations.

The excitation signal is injected into the modulation signal for the stationary $\alpha$-axis. The excitation in the $\alpha$-axis is, indeed, also affected by the inverter imperfections, for instance, the lockout time. Further on, the current $i_\alpha$ responding to the excitation is phase-shifted by 90 degrees with respect to the voltage only in an ideal case, whereas the winding does not have any resistance. In practice, it should be a non-zero winding resistance, with the phase shift lower than 90 degrees. Additional impact is due to the excitation-frequency eddy-currents in the magnetic circuit. Acting as a short-circuited secondary winding in a transformer-like electromagnetic setup, the iron losses introduce an additional change of the phase shift. Thus, these factors, but not limited to them, cause the phase of high-frequency component in $\iab$, see the term $S(\cdot)$ in \eqref{ident1}, to be different from the one in the test signal.

Fig. \ref{fig:loss} illustrates the effect of the phase shift loss on the virtual output estimate $\hat{y}_{v_1}$ via simulations, namely, adding ``artificial'' phase shifts in the high-frequency components of \eqref{ident1} to study the changes of $\hat{y}_{v_1}$. In terms of \eqref{yv}, the virtual output $y_{v_1}$ admits the form
$
y_{v_1} = a \cos 2\theta + b
$
with some constants $a$ and $b$. In Fig. \ref{fig:loss} we observe the phase shift causes the drifts of the amplitude $a$ and the bias $b$. A natural compensation method is using the signal $S(t) = - { V_h } \cos (\omega_h t + \phi_p)$ rather than \eqref{S} in the gradient descent operator $\hgrad$, with $\phi_p$ tuned in $[0,2\pi)$.

Here we introduce an alternative approach to compensate all the losses, which not only contain the phase shifts but also the inductance values, at the ports of virtual output estimates. That is using the compensated virtual outputs
$$
\hat{y}_{v_1}^p := \ell_1 \hat{y}_{v_1} + \ell_2, \quad
\hat{y}_{v_2}^p := \ell_3 \hat{y}_{v_2},
$$
and corresponding angle estimate
$$
\hat{\theta} = {1 \over 2} \arctan \left\{ {\hat y_{v_2}^p\over \hat y_{v_1}^p - {L_0\over L_dL_q}}\right\}.
$$
The parameter adjustment principle of $\ell_i$ $(i=1,2,3)$ is to make the infimums and supremums of the signals $\hat{y}_{v_1}^p $ and $\hat{y}_{v_2}^p $ coincide with the ones calculated from \eqref{yv}. {For more details, we first run the motor at a constant speed, with the obtained estimates $\hat{y}_v$ approximating some periodic signals, the means and the amplitudes of which may differ from the calculations from \eqref{yv}. We then tune $\ell_i$ in order to let $\hat{y}_v$ have the same means and the amplitudes with the ones in \eqref{yv}. It is clear that $\ell_1$ and $\ell_3$ are proportional to the amplitudes, and $\ell_2$ affects the mean value. Such a procedure can be done offline, and then we apply the well-tuned parameters for online estimation.}

\ \\ \
\textbf{System Configuration.} The scheme developed in this paper was tested on an IPMSM platform, shown in Fig. \ref{fig:equip}. The test IPMSM is a FAST IPMSM, whose parameters are given in Table \ref{tab:1}. It has a 72 V line-to-line peak at 1000 RPM. The voltage of DC bus is 521 V, with the frequency of PWM 5 kHz.

The experimental setup comprises two synchronous motors, one of which runs in the speed control mode, used to maintain the speed at the desired level. The motors are coupled by means of a toothed belt, which also connects an inertial wheel. In Fig. \ref{fig:equip}, the experimental setup is also equipped with two mechanically coupled, inverter supplied brushless DC motors:  1) main power supply unit comprising the line rectifier and two 3-phase PWM inverters with control circuits, 2) DC-bus support with dynamic breaking, 3) speed controlled motor, 4) torque controlled motor, 5) inertia coupled with both motors. The motor under the test is obtained by taking an industry-standard FAST motor and introducing changes into the rotor magnetic circuit so as to obtain the difference (2:3) between the $d$-axis and $q$-axis inductance. This motor runs in the torque-control mode. The speed and position are obtained through the high-speed digital serial link from standard industrial high-resolution sensors mounted on the shaft. The sampling time is $T_s=300$ ns, and the acquisition time is set to cover at least two electrical periods. The three-phase currents, voltages and the rotor position were measured from the drive measurement system---a "Sincoder" shaft sensor.

\begin{figure}
\centering
\includegraphics[width=6cm,height=3cm]{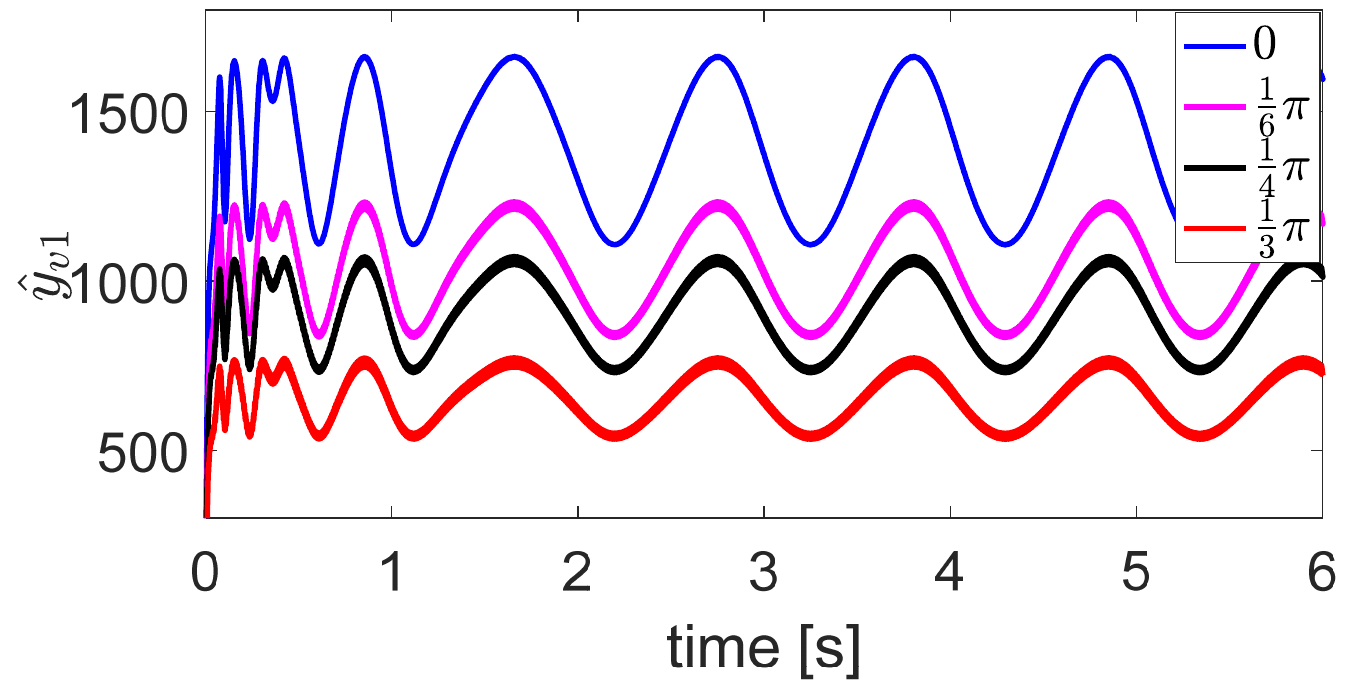}
\caption{The loss effect of shift drifts}
\label{fig:loss}
\end{figure}

\begin{figure}[ht]
\centering
    \includegraphics[width=5cm]{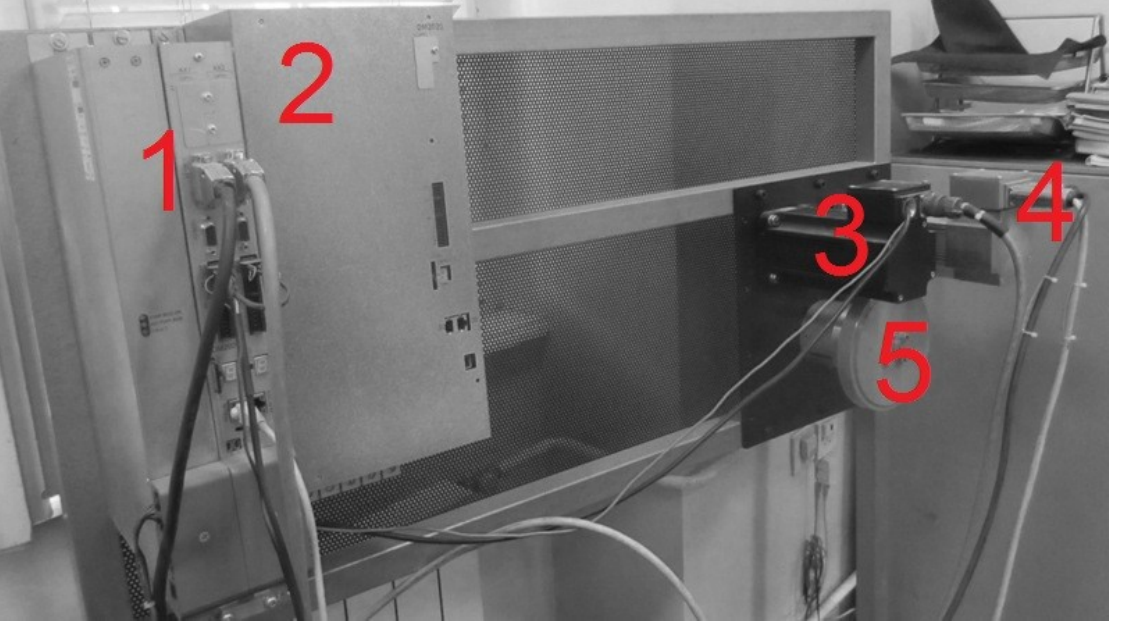}
\caption{Experimental testing setup}
\label{fig:equip}
\end{figure}


In experiments, we only test the proposed estimator, thus the estimated signals are not used in the closed-loop system, whose bandwidth of the speed loop is larger than 100 Hz. The test signal is injected into the modulating signal in the $\alpha$-axis. The test motor is driven by another motor which kept the speed at 60 RPM. 

\ \\ \
\textbf{Experimental Results.}
~In the first experiment, the amplitude of the test signal is 2 V, with the frequency 400 Hz {with zero reference currents ($i_d^d$ and $i_q^d$)}, in this way we can increase the resolution of the measurement system. The parameters of the estimator are selected as $\gamma_\alpha =1.25\times 10^4$ and $\gamma_\beta =2.5\times 10^4$. {Let us first consider a normal operation mode with a 400 Hz test signal and zero load, and run the motor at a constant speed. For such a case, Fig. \ref{fig:exp1} shows the performance of the proposed position estimator, which generally works well.} The test signal was only injected to the $\alpha$ axis, which is illustrated in Fig. \ref{fig:exp2} after Clarke transform. {Under the same conditions, we compare the proposed angular estimator with the conventional LTI filter in Fig. \ref{fig:comp1}, in which we observe the accuracy enhancement of the new design. This is probably because the phase lag in the virtual output estimates of the proposed design is smaller than that of the LTI filter, see Fig. \ref{fig:comp2}. Of course, the performance of the conventional methods may be further improved via some particular technique-oriented tricks in the applied literature, which, however, are out of scope of this paper.}

{The motor platform utilized 5 kHz PWM, thus we should select the frequency of the test signal less than one tenth of this value, {\em i.e.}, 500 Hz, in order to be able to ``generate" the desired sinusoidal signals. We first test an 800 Hz probing signal, but unfortunately, it does not work as expected.} We present some experimental results in Fig. \ref{fig:frequency} to illustrate estimation performance with the injection frequencies 200 Hz and 100 Hz, respectively. We conclude that the performances degenerate with the frequency varying from 400 Hz to 100 Hz. A possible interpretation, as discussed in \textbf{D2}, is that such a case has a low signal-to-noise ratio with a large $\omega_h$. {A similar result was observed for magnetic levitation systems in \cite{YIetalcst}.} Therefore, the injection frequency is suggested to be selected in [200, 400] Hz for the tested motor {in our experimental testing setup. For any other equipment, the probing frequency can be selected following the same procedure above, considering both the frequency of PWM signals and the signal-to-noise ratio}.

\begin{figure}[h]
\centering
\subfigure[Angle $\theta$ and its estimates]{
    \includegraphics[width=0.22\textwidth,height=0.15\textwidth]{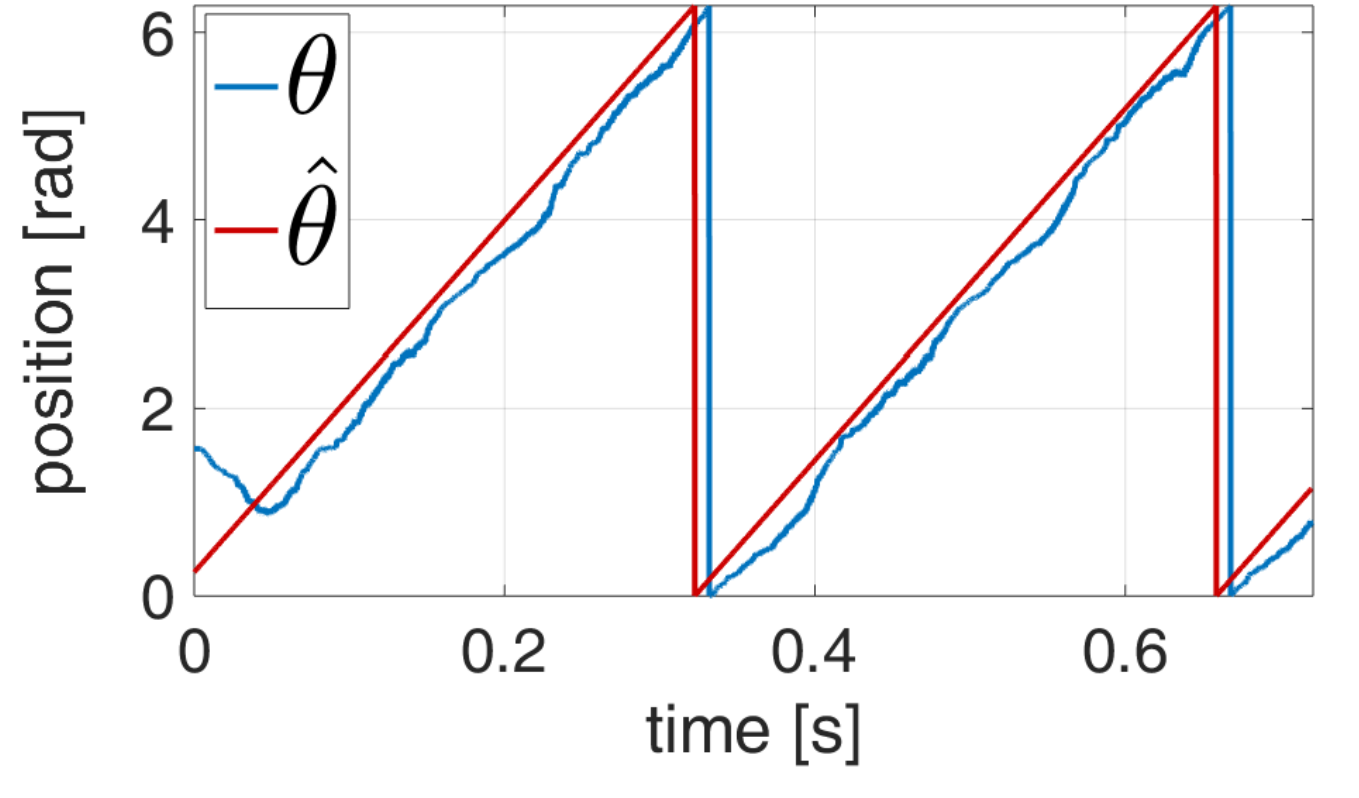}
    \label{fig:exp1}
}
\subfigure[Stator voltage $u_\alpha$]{
    \includegraphics[width=0.22\textwidth,height=0.15\textwidth]{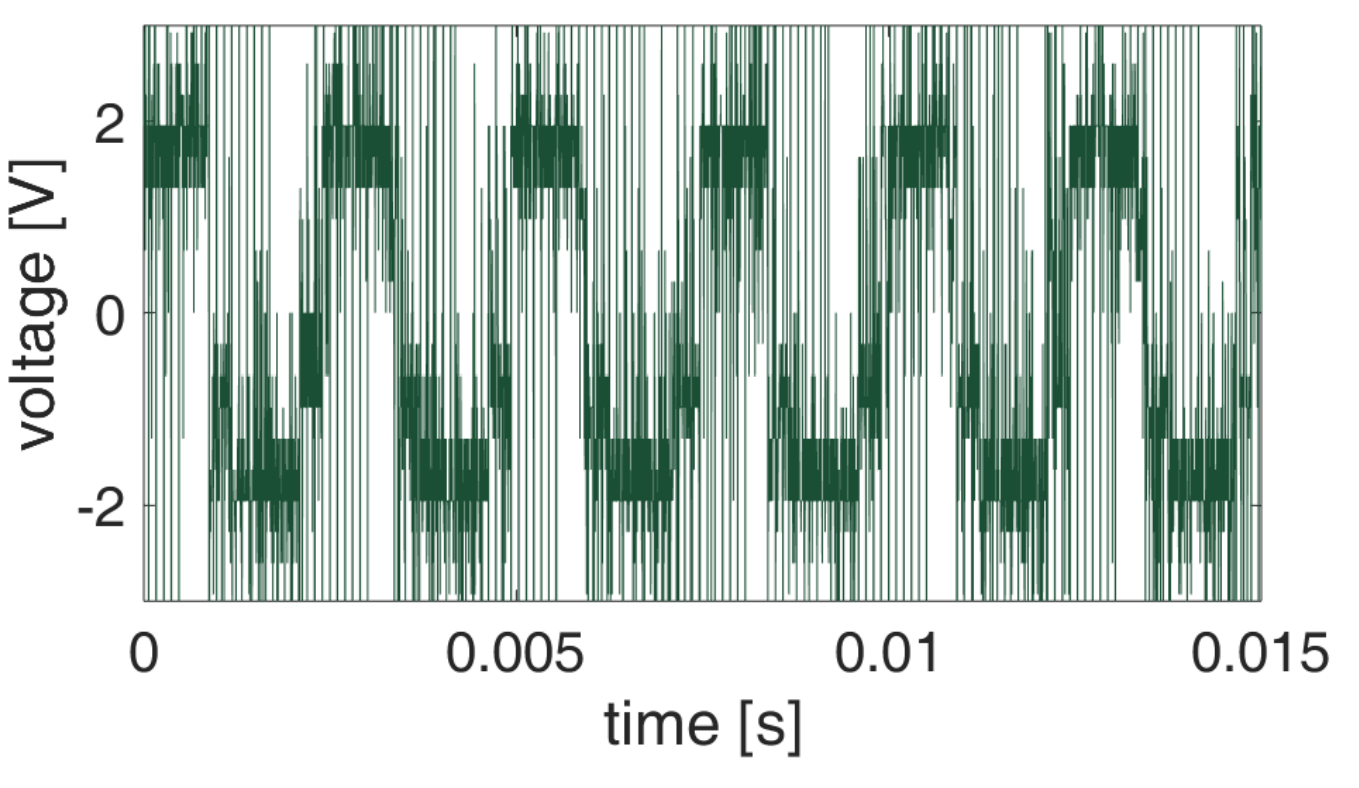}
    \label{fig:exp2}
}
\caption{Experimental results (zero load, 400 Hz injection signal, constant speed)}
\end{figure}

\begin{figure}[h]
\centering
\subfigure[Angle $\theta$ and its estimates]{
    \includegraphics[width=0.46\textwidth,height=0.23\textwidth]{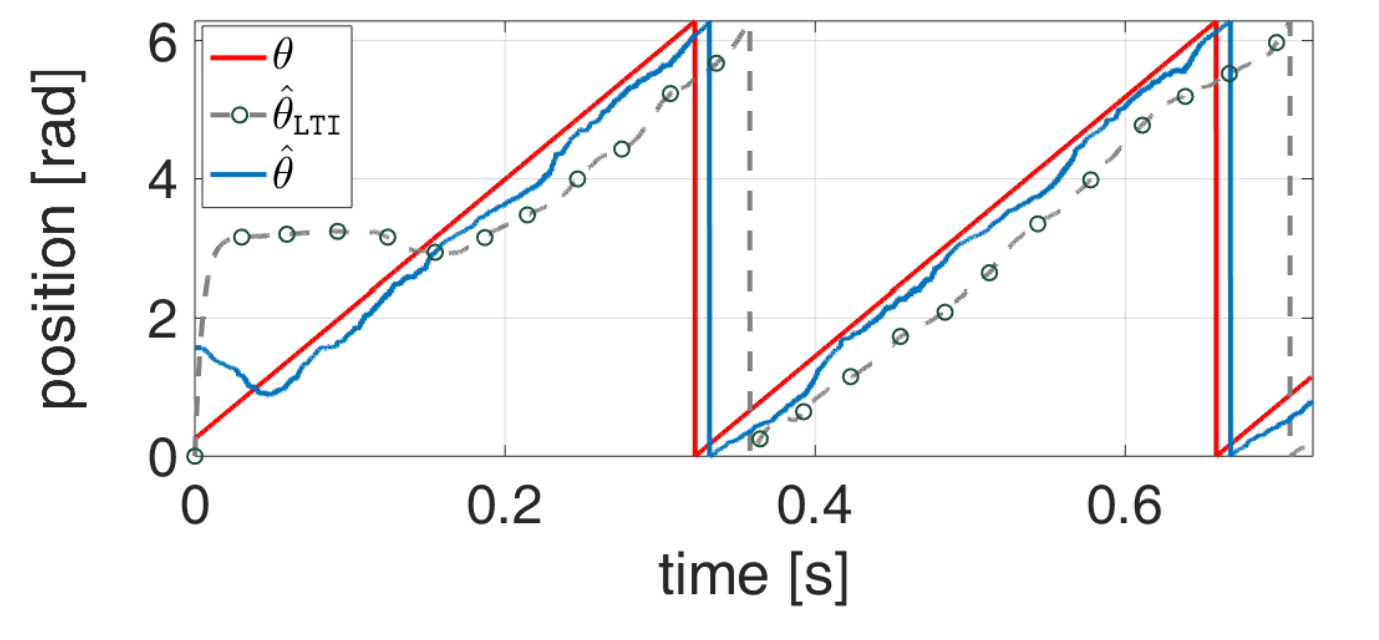}
    \label{fig:comp1}
}
\subfigure[Virtual output $y_{v1}$ and its estimates $\hat{y}_{v1}$]{
    \includegraphics[width=0.46\textwidth,height=0.23\textwidth]{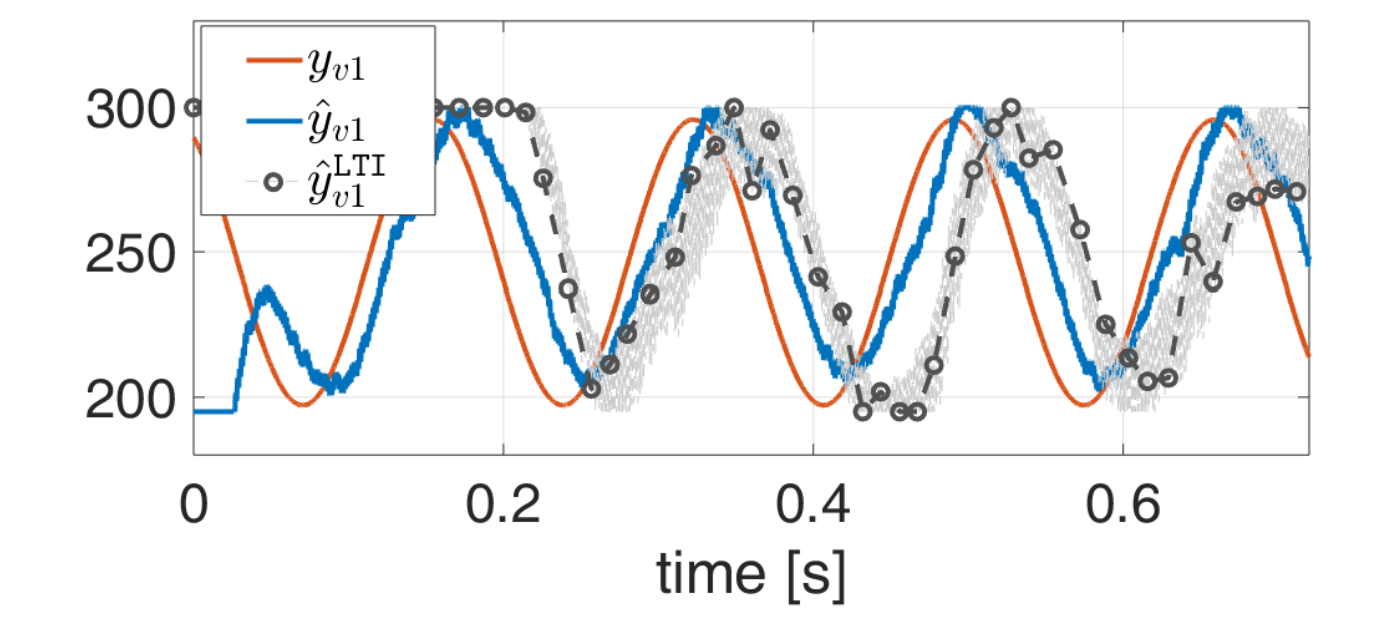}
    \label{fig:comp2}
}
\caption{{Comparison between the proposed estimator and the conventional LTI filter via experiments}}
\end{figure}

%

\begin{figure}[h]
\centering
\subfigure[Test frequency 200 Hz]{
    \includegraphics[width=0.22\textwidth,height=0.15\textwidth]{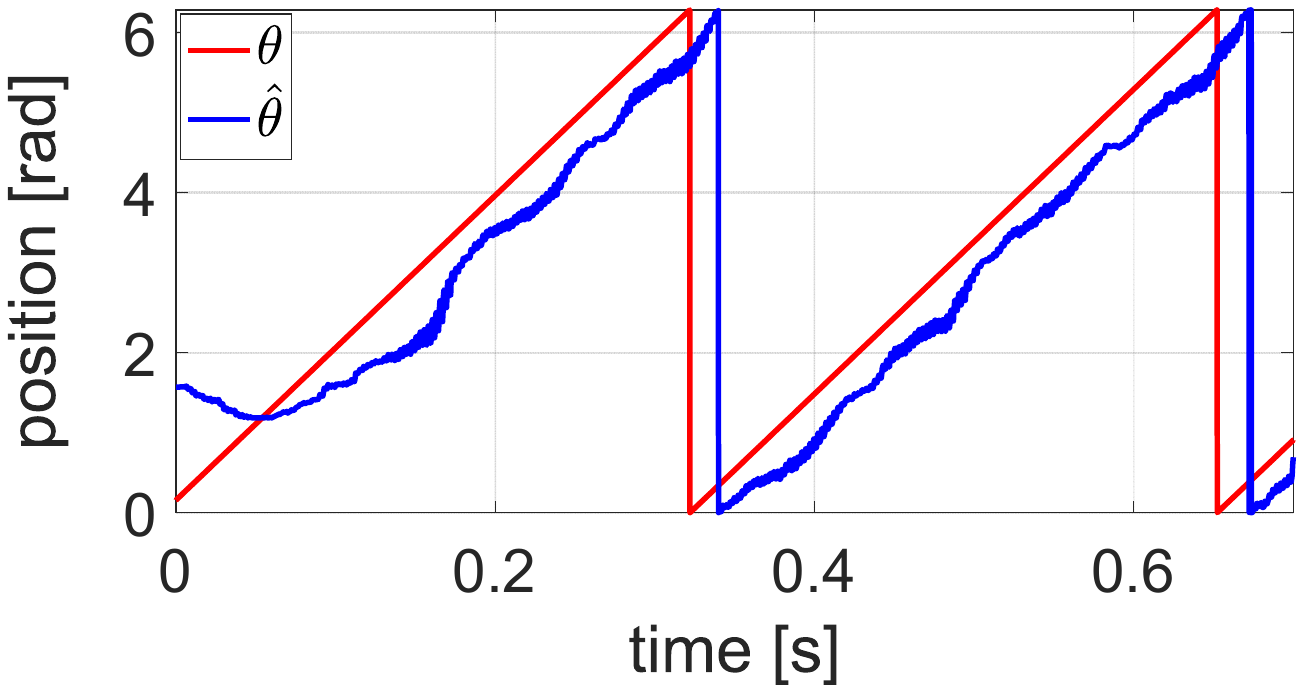}
    \label{fig:exp200Hz}
}
\subfigure[Test frequency 100 Hz]{
    \includegraphics[width=0.22\textwidth,height=0.15\textwidth]{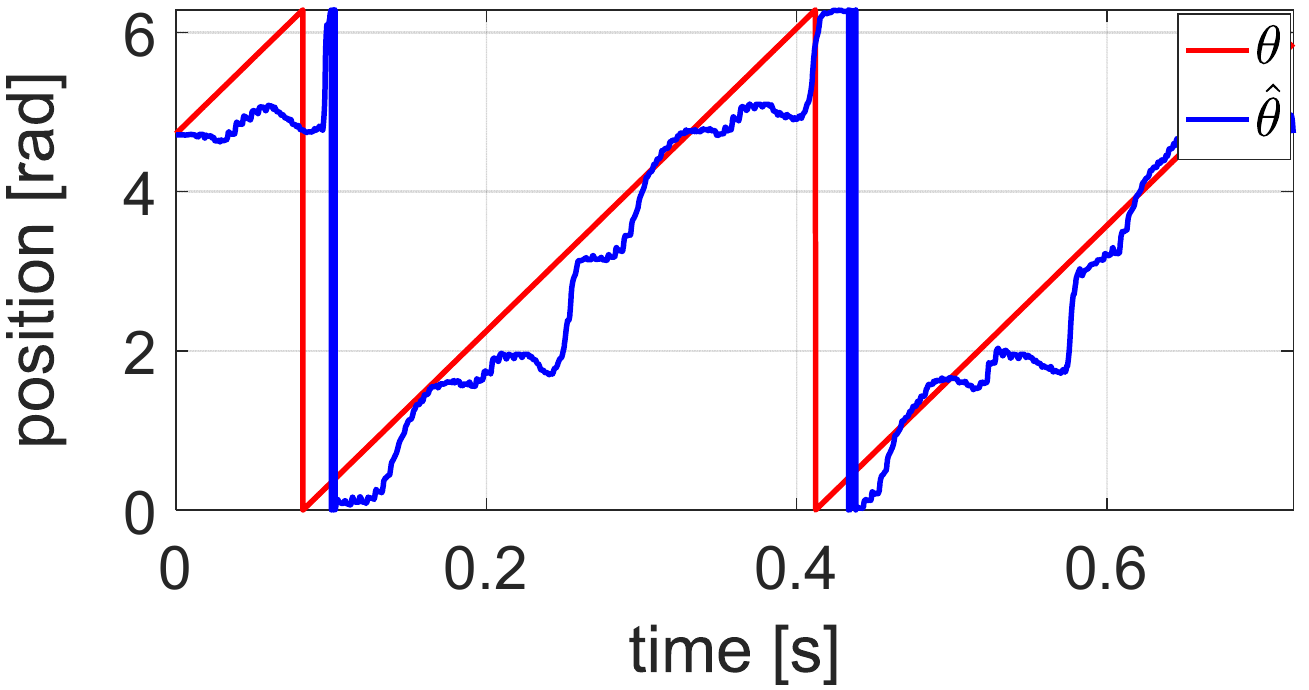}
    \label{fig:exp100Hz}
}
\caption{Effects of the injection frequency}
\label{fig:frequency}
\end{figure}


{In order to evaluate the proposed HPF, we consider the cases with non-zero reference current $i_q^d$, which is proportional to the load $T_L$, with angular velocities at 60 RPM and 40 RPM, respectively. The corresponding results with constant loads are illustrated in Figs. \ref{fig:exp_60} and \ref{fig:exp_40}, in which we get relatively satisfactory performances. We then test the performance with time-varying loads at 40 RPM in Fig. \ref{fig:varying_load}, where a slight distortion may be observed probably due to a relatively heavy load.\footnote{\rm In order to further improve estimation performance, magnetic saturations should be taken into account, but it is out of scope of this paper.} Fig. \ref{fig:reversal} shows the results when the motor under a constant load is {reversing from +20 RPM to -20 RPM in around one second}. It is sufficient to verify the effectiveness of the proposed method in several demanding conditions. }

\begin{figure}[h]
\centering
\subfigure[$\omega = 60$ RPM with constant load ($i_q^d \approx 1$ A)]{
    \includegraphics[width=0.22\textwidth,height=0.15\textwidth]{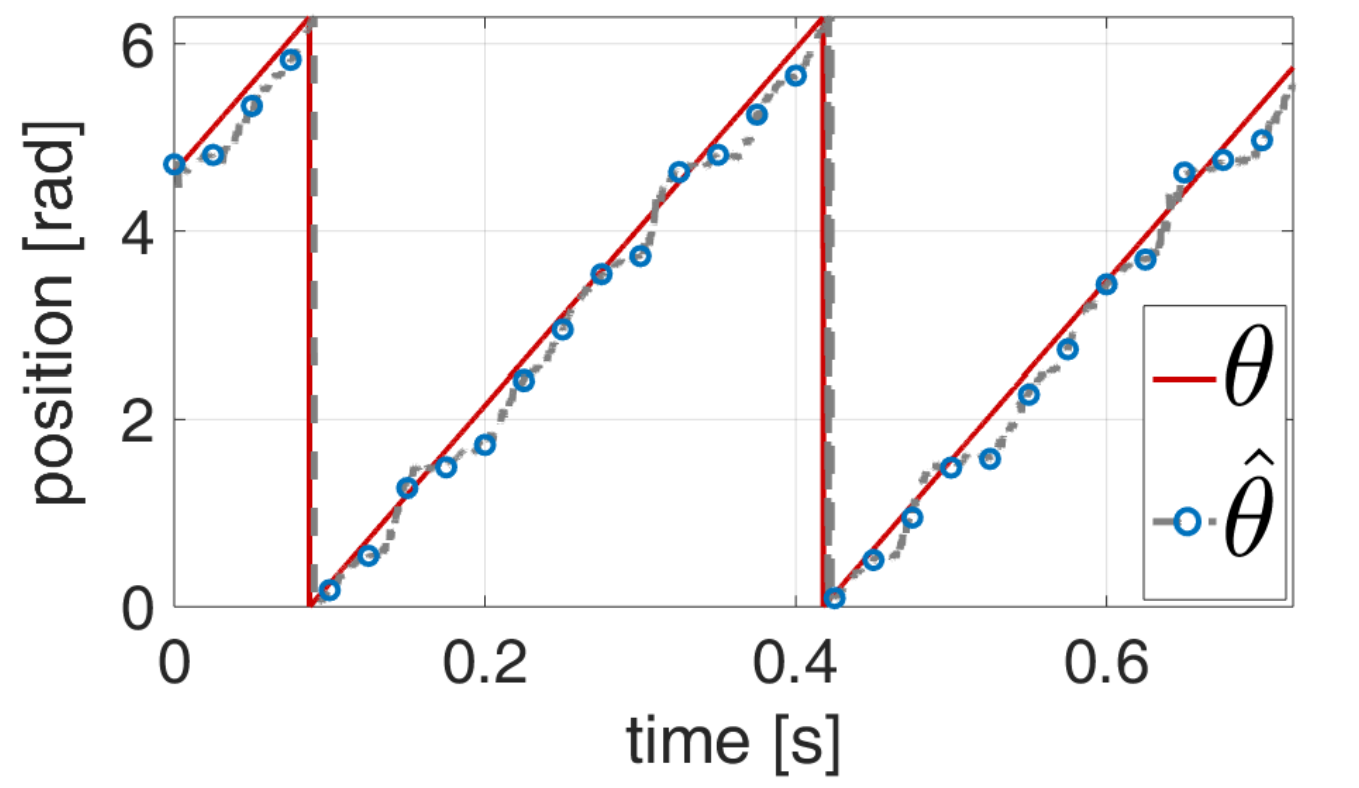}
    \label{fig:exp_60}
}
\subfigure[$\omega = 40$ RPM with constant load ($i_q^d \approx 1$ A)]{
    \includegraphics[width=0.22\textwidth,height=0.15\textwidth]{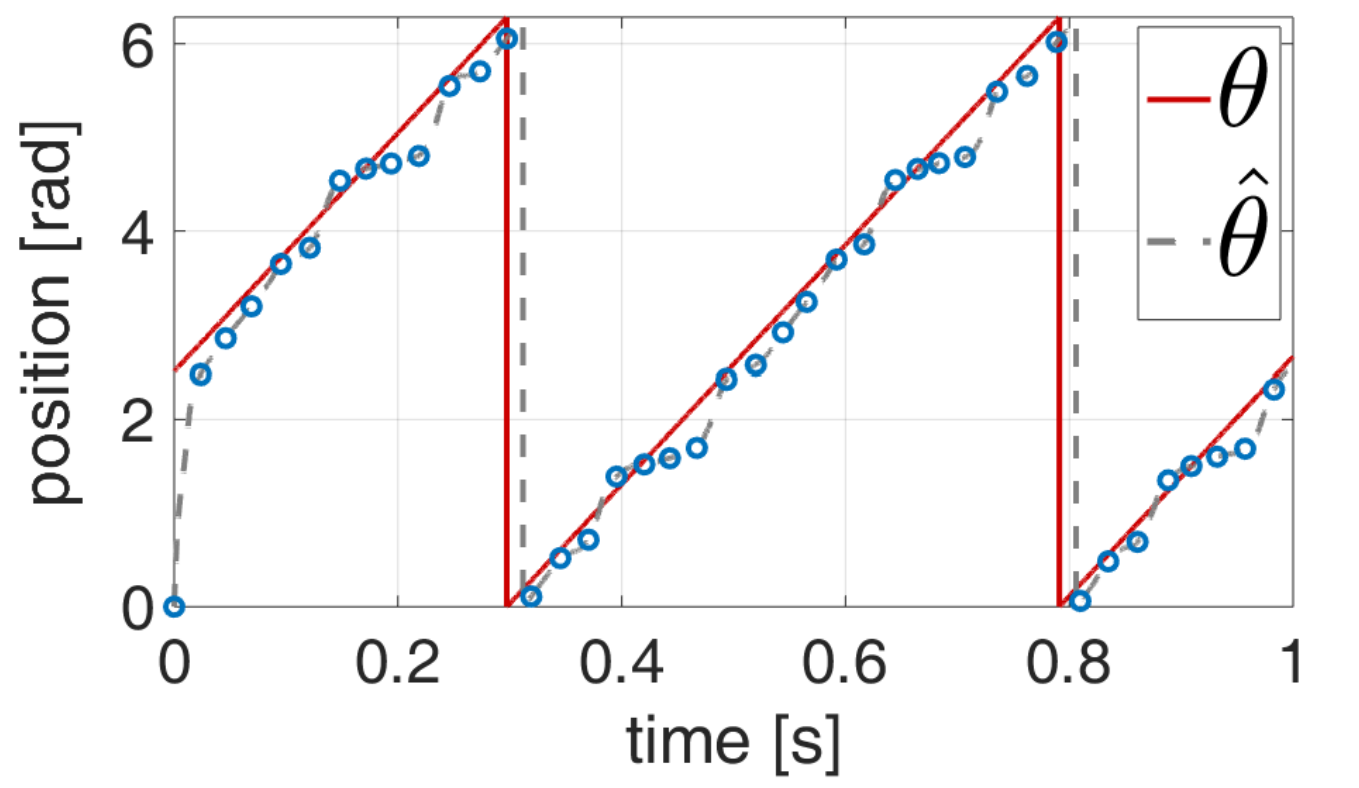}
    \label{fig:exp_40}
}
\subfigure[Time-varying load at $\omega = 40$ RPM]{
    \includegraphics[width=0.22\textwidth,height=0.15\textwidth]{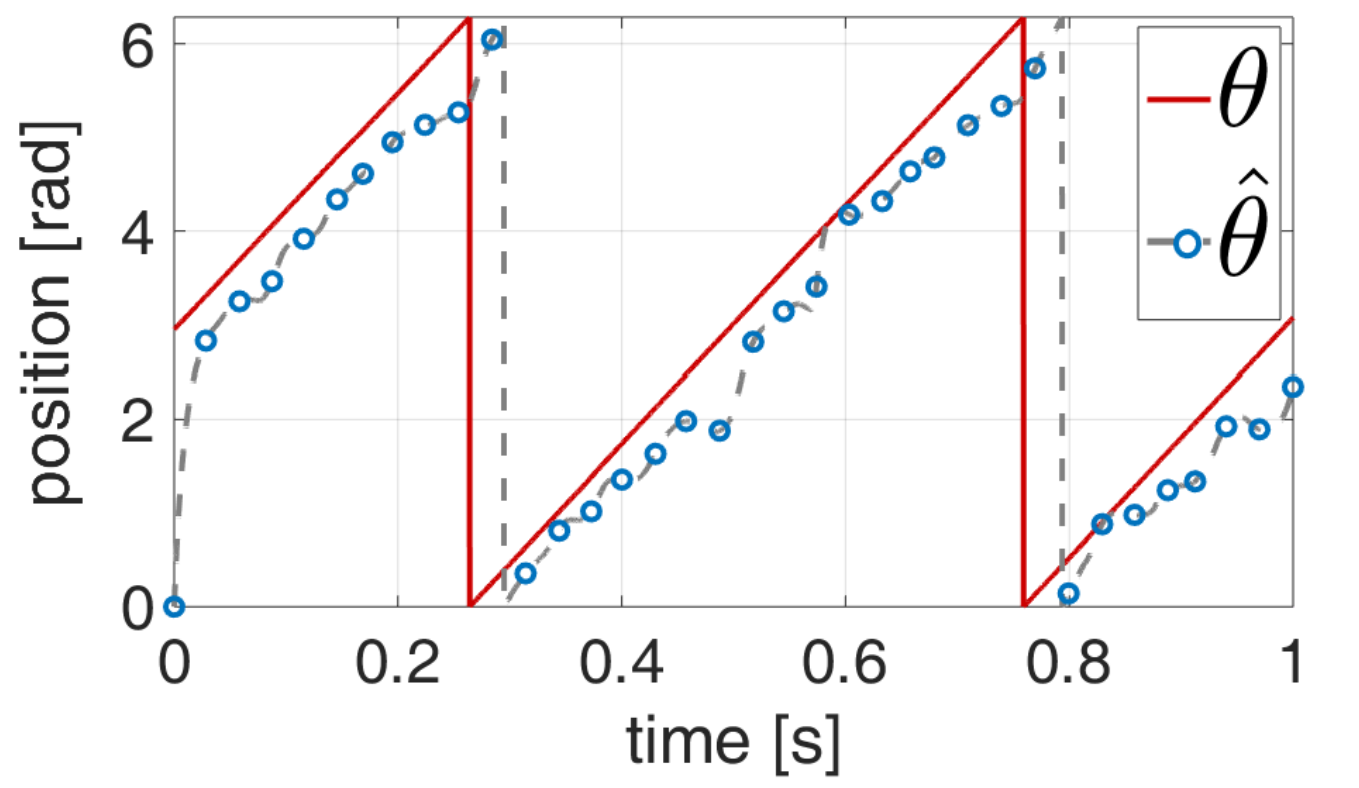}
    \includegraphics[width=0.23\textwidth,height=0.15\textwidth]{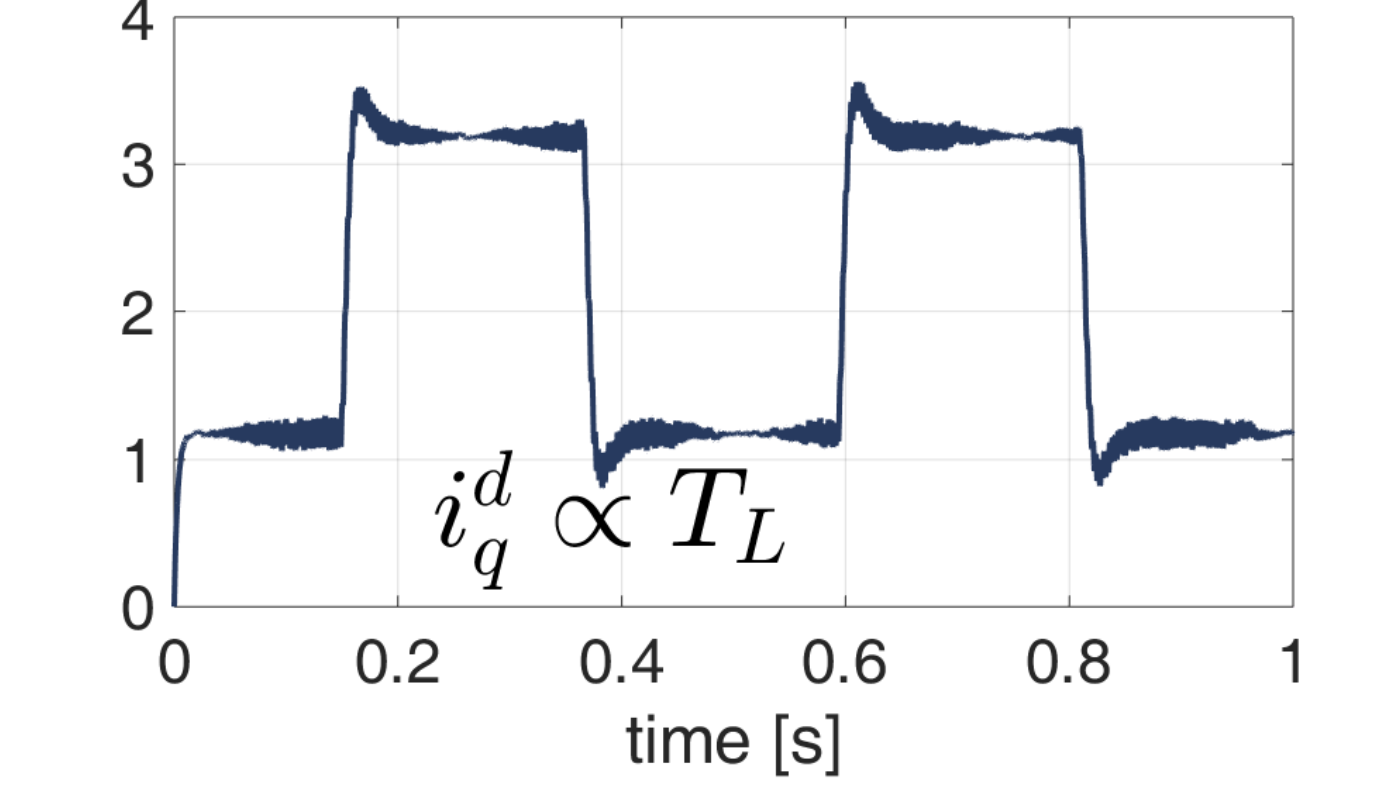}
    \label{fig:varying_load}
}
\caption{{Estimation performance with loads}}
\label{fig:ctrl_exp2}
\end{figure}

\begin{figure}[h]
\centering
    \includegraphics[width=0.45\textwidth,height=0.18\textwidth]{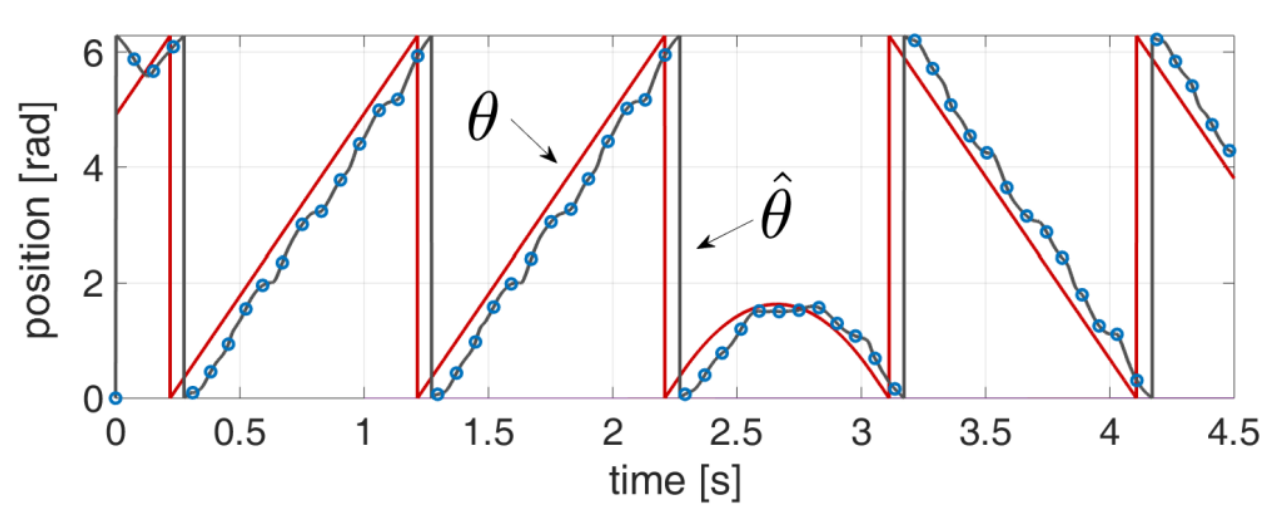}
    \includegraphics[width=0.45\textwidth,height=0.18\textwidth]{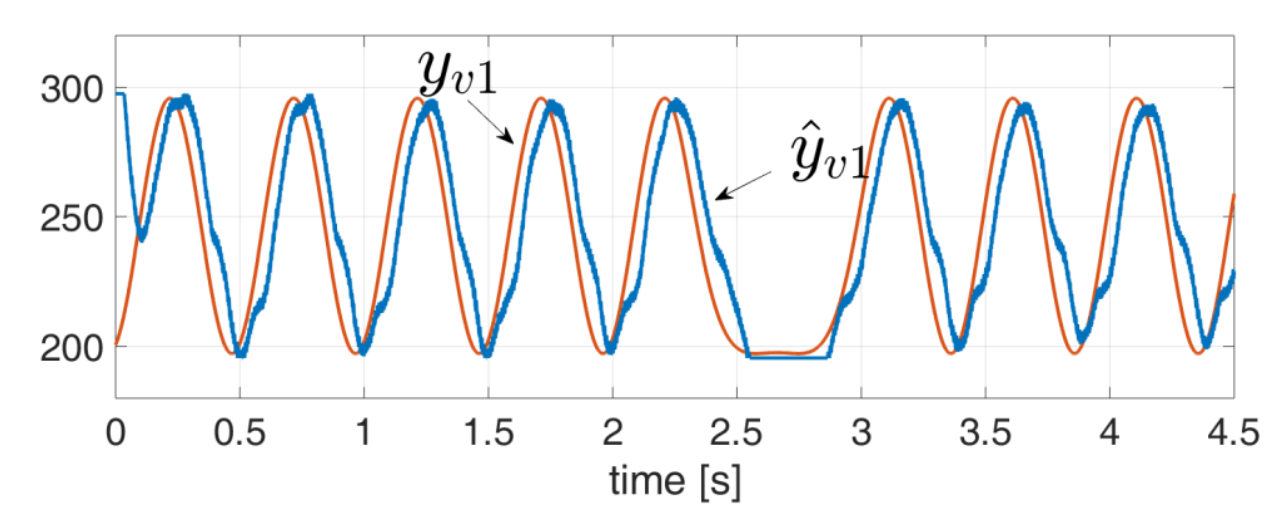}
\caption{{Speed reversal from +20 RPM to -20 RPM ($i_q^d \approx 1.8$ A)}}
\label{fig:reversal}
\end{figure}

\section{\cfm{\large Conclusion}}
\label{sec7}
%
This paper addresses the problem of position estimation of IPMSMs at low speeds and standstill. Although the saliency-tracking-based methods are effective and widely-studied, the theoretical analysis of the conventional methods, taking into account the nonlinear dynamics of IPMSMs, was conspicuous by its absence. This paper attempts to fill in this gap analysing the stator current $\iab$ via the averaging method, with guaranteed error with respect to the injection frequency $\omega_h$. Also, with the key identity \eqref{ident1}, we develop a new position estimator, which ensures an improved accuracy. Moreover, we establish the connection between the new method and the conventional one, showing that they can be unified in the HPF/LPF framework from the perspective of signal processing.

The following extensions and issues are of interest to be further explored.
\begin{itemize}
  \item For the sake of clarity, we only study the basic case of signal-injection methods for the IPMSM model \eqref{volt_model}. The proposed method can also be extended to other motor models, for instance, saturated interior (or surface mounted) PMSMs.

      \vspace{0.1cm}

  \item It is of interest to couple the proposed method with some model-based (non-invasive) techniques, for instance the gradient descent observer in \cite{ORTetalcst,MALetal}, in order to be able to operate the sensorless controller over a wide speed range. Such an approach has been pursued in \cite{CHOetal,ORTetalauto}.

   \vspace{0.1cm}

  \item {The proposed method is relatively sensitive to power converters dead times. It is of practical interests to develop a self-commissioning methods by means of adaptive observers, which is promising to obtain a more practically useful scheme.}
\end{itemize}

\section*{\cfm{\large Acknowledgment}}

The authors would like to thank Fran\c{c}ois Malrait at Schneider Electric for some technical clarifications. This paper is supported by the National Natural Science Foundation of China (61473183, U1509211, 61627810), National Key R\&D Program of China (2017YFE0128500), China Scholarship Council, and by the Government of the Russian Federation (074U01), the Ministry of Education and Science of Russian Federation (14.Z50.31.0031, goszadanie no. 8.8885.2017/8.9).


\end{document}